\theoremstyle{definition}
\newtheorem*{assumption*}{Assumption}
\newtheorem{theorem}{Theorem}
\theoremstyle{definition} 
\newtheorem{lemma}[theorem]{Lemma}
\newtheorem{proposition}[theorem]{Proposition}
\newtheorem{definition}{Definition}
\newtheorem*{remark}{Remark}
\pgfplotsset{compat=1.18} 
\pgfplotsset{set layers=standard}
\title{Co-Investment with Payoff-Sharing Mechanism for Cooperative Decision-Making in Network Design Games}
\author{
 Mingjia He \\
  {\abssize Institute for Dynamic Systems and  Control, ETH Zurich, minghe@ethz.ch}\\
   \And
   Andrea Censi  \\
  {\abssize Institute for Dynamic Systems and  Control, ETH Zurich, acensi@ethz.ch}
    \And
  Runyu Zhang \\
  {\abssize  Laboratory for Information and Decision Systems, Massachusetts Institute of Technology, runyuzha@mit.edu} \\
  \And
  Emilio Frazzoli \\
  {\abssize  Institute for Dynamic Systems and  Control, ETH Zurich, emilio.frazzoli@idsc.mavt.ethz.ch} \\
  \And
 Gioele Zardini \\
  {\abssize  Laboratory for Information and Decision
Systems, Massachusetts Institute of Technology, gzardini@mit.edu} 
}
\begin{document}
\maketitle
\begin{abstract}
Network-based systems are inherently interconnected, with the design and performance of subnetworks being interdependent. However, the decisions of self-interested operators may lead to suboptimal outcomes for users and the system as a whole.  
This paper explores cooperative mechanisms that can simultaneously benefit both operators and users. 
We address this challenge using a game-theoretical framework that integrates both non-cooperative and cooperative game theory. 
In the non-cooperative stage, we propose a network design game in which subnetwork decision-makers strategically design their local infrastructures. 
In the cooperative stage, co-investment with payoff-sharing mechanism is developed to enlarge collective benefits and fairly distribute them, supporting cooperative decision-making within a competitive environment.
To demonstrate the effectiveness of our framework, we conduct case studies on the Sioux Falls network and real-world public transportation networks in Zurich and Winterthur, Switzerland. 
Our evaluation considers impacts on environmental sustainability, social welfare, and economic efficiency. 
The results indicate that small upfront co-investment can lead to substantial long-term system improvements. 
Furthermore, operators’ diversity provides significant potential for performance improvement under the proposed mechanism. 
In the Zurich–Winterthur case, we examine the influence of bargaining power and strategic exploitation, showing that these factors strongly impact both individual benefits and the willingness to cooperate. 
The proposed framework provides a foundation for improving interdependent networked systems by enabling strategic cooperation among self-interested operators.
\end{abstract}


\newcommand{\graph}{\Gamma}
\newcommand{\graphspace}{\boldsymbol{\Gamma}}
\newcommand{\demandmodel}{\mathcal{Y}}
\newcommand{\operatorSet}{\mathcal{I}}
\newcommand{\actionspacei}{\mathcal{H}_i}
\newcommand{\actionspace}{\mathcal{H}}
\newcommand{\nonnegativenumbers}{\mathbb{R}_{\geq 0}}
\newcommand{\nonnegativeintegernumbers}{\mathbb{N}_{\geq 0}}
\newcommand{\nodes}{\mathcal{V}}
\newcommand{\edges}{\mathcal{E}}
\newcommand{\labelsmap}{\ell}
\newcommand{\labels}{\mathcal{L}}
\newcommand{\networkgame}{\mathsf{NG}}

\newcommand{\edgesi}{\mathcal{E}_i}
\newcommand{\actioni}{h_i}
\newcommand{\budgeti}{B_i}
\newcommand{\payoffi}{f_i}
\newcommand{\weighte}{\omega^e_i}
\newcommand{\weightc}{\omega^c_i}
\newcommand{\weightp}{\omega^p_i}

\newcommand{\tup}[1]{\left(#1\right)}
\newcommand{\set}[1]{\left\{ #1 \right\}}

\newacronym{acr:ndg}{NDG}{Network Design Game}
\newacronym{acr:ne}{NE}{Nash equilibrium}
\newacronym{acr:pt}{PT}{public transport}
\newacronym{acr:minlp}{MINLP}{mixed-integer nonlinear program}
\newacronym{acr:micp}{MICP}{Mixed-Integer Convex Program}
\newacronym{acr:nbs}{NBS}{Nash Bargaining Solution}

\newcommand{\m}[1]{\textcolor{orange}{#1}}

\section{Introduction}
\label{sec:introduction}
Globalization has deepened the interconnections among economic, political, and technological systems, amplifying their complexity and mutual dependence~\cite{helbing2013globally}. 
For instance, the World Bank reports that global trade rose from 50\% of gross domestic product (GDP) in 2000 to 63\% in 2022~\cite{globaltrade}, while the number of international migrants reached 281 million in 2020, 183\% higher than in 1990~\cite{WorldMigrationReport}.
These trends, coupled with ongoing population growth, have reshaped key infrastructures such as healthcare, transportation, communication, production, and energy systems.
At the core of these infrastructures are networks, which serve as the backbone for moving people, goods, and information.
They typically consist of multiple subnetworks, each managed by distinct decision-makers with their own objectives, constraints, and incentives.
Due to their inter-connectivity, decisions made in one domain inevitably affect others.
Such interdependence means that \emph{strategic interactions} among network designers are not merely peripheral; they are central to determining both local and system-wide performance.
When decision-makers prioritize their own objectives, the overall system often suffers from suboptimality~\cite{paccagnan2019nash,basar1999dynamic}.
This is particularly evident in the design of transportation infrastructure networks. 
For instance, in cross-border railway projects, a lack of coordination has been identified as a key bottleneck to efficient long-distance travel~\cite{cats2025shift,GROLLE2024103906}.
Similarly, in many regions, urban public transport is run by large operators, while rural services depend on smaller local providers; insufficient integration between the two undermines both rural viability and urban growth~\cite{Zeng2023OptimizationOE}.
Sometimes, however, \emph{strategic cooperation} between operators can yield benefits that far exceed those achievable through isolated action.
A notable example occurred in October 2024, when Switzerland and Germany reached an agreement for Switzerland to invest €50 million in electrifying sections of the German railway network~\cite{Wintle2024}.
At first glance, funding infrastructure abroad, especially when domestic projects remain unfunded, may seem counterintuitive.
Yet, in this case, the electrified segments will shorten travel times between Basel and St. Gallen by roughly 20 minutes~\cite{horpeniakova2024railmarket}, improving rail's competitiveness against road transport.
This case illustrates a broader principle: when networks are interconnected, investing in another operator's infrastructure can deliver greater system-wide gains than investing solely within one's own domain.
However, forging such agreements in multi-agent environments is far from straightforward.
It requires understanding \emph{when} cooperation should occur, \emph{what} joint projects should be pursued, and \emph{how} the resulting benefits should be shared fairly.

In this context, game theory offers a natural framework for analyzing such strategic settings (see, e.g.~\cite{zardini2021game,zardini2023strategic, he25, seo25tcns} for previous work).
It models agents as rational, self-interested decision-makers and distinguishes between \emph{non-cooperative games}, where agents act independently without binding commitments, and \emph{cooperative games}, where agents can form agreements to coordinate actions and share gains. 
In reality, many systems blend these two extremes.
In supply chain management, for example, competing suppliers may share production of transport capacity to reduce costs~\cite{simatupang2002collaborative}. 
In international economics, governments manage domestic policy while negotiating trade agreements~\cite{irwin2024does}.
Likewise, in network design, subnetwork operators may invest independently in their own infrastructure, yet also negotiate cross-network investments or joint projects.
To support such decision-making, we develop a game-theoretic framework for interactive network design (\cref{fig:co_framework}) that explicitly integrates both cooperative and non-cooperative elements.
Central to this framework are two mechanisms: i) co-investment, enabling multiple operators to jointly finance infrastructure projects, and ii) payoff-sharing, ensuring the resulting collective gains are allocated in a fair and incentive-compatible manner.
By embedding these mechanisms into the network design process, we aim to align the interests of self-interested operators while simultaneously improving outcomes for the system's end users.

\begin{figure}[tb]
    \centering\includegraphics[width=0.8\linewidth]{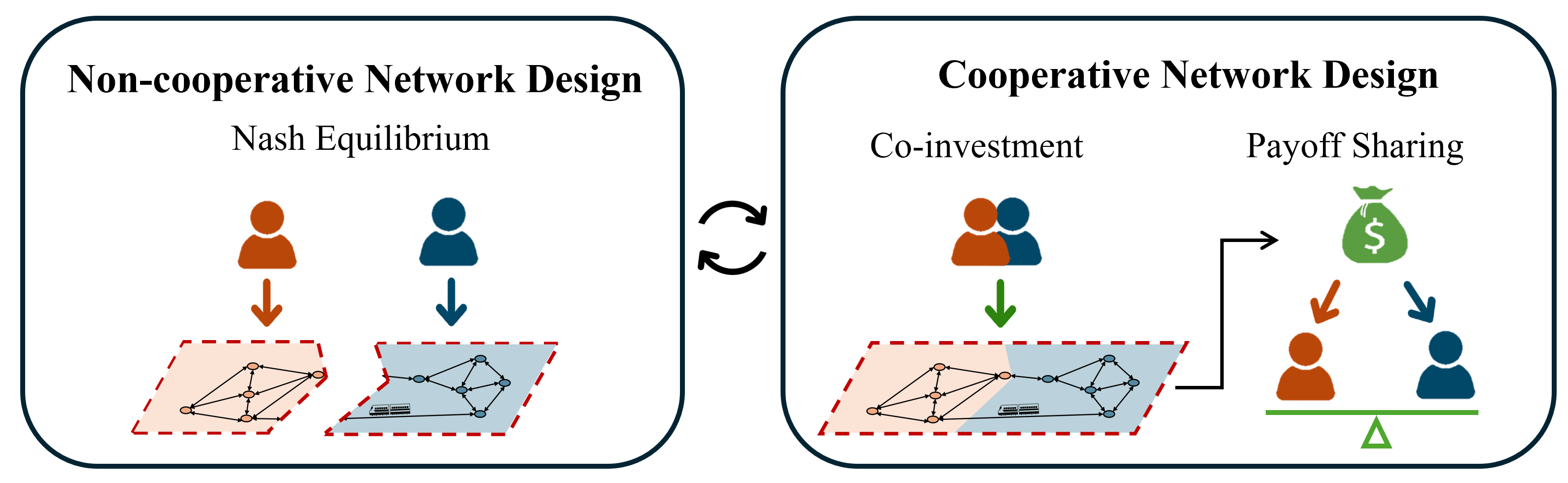}
    \caption{The interactive network design framework, featuring a non-cooperative, as well as a cooperative phase. }
    \label{fig:co_framework}
\end{figure}
\subsection{Related research}
In transportation planning, network design is a crucial strategic decision that significantly impacts the operation of mobility systems~\cite{FARAHANI2013281,liu2021review, Yang01071998,zardini2022analysis}. 
Typical design actions include modifying the network configuration (e.g., adding or removing links) and adjusting link properties (e.g., capacity, frequency, and service quality)~\cite{Luathep2011GlobalOM}.
Mobility systems typically involve multiple stakeholders, including both system operators and users. 
Accordingly, strategic interactions in network design problems can be categorized into two types: network–user interactions and network–network interactions.

\textit{Network-user interactions} lie at the interface between network operators, who set network design parameters, and users, whose travel behavior determines the realized system performance.
Operators often pursue user-oriented objectives such as minimizing total travel time~\cite{gao2005solution}, maximizing service coverage~\cite{lo2009time}, or increasing profit~\cite{dimitriou2008genetic}. 
Users, in turn, select modes and routes based on the available infrastructure, influencing whether operator objectives are achieved.
As a result, most network design models explicitly incorporate user reactions to design decisions.
In this context, a common approach is to formulate the problem as a bi-level optimization problem, where the upper level models the operator's network design decisions, and the lower level captures user route/mode choices on the designed network.
This hierarchical structure enables the network operator to anticipate user responses when optimizing the network.
Lower-level models often include traffic assignment (deterministic or stochastic), discrete choice, and hybrid combinatorial formulations.
For instance, the User Equilibrium (UE) model assumes selfish travelers choosing routes to minimize their own costs, while the Social Optimum (SO) one assumes cooperative travelers minimizing total system travel time~\cite{wardrop1952road}.
Using UE, the design problem becomes a Mathematical Program with Equilibrium Constraints (MPEC), a formulation that is challenging due to nonlinearity and equilibrium conditions.
Notable solution approaches include mixed-integer linear programs (MILPs) with continuous capacity variables~\cite{WANG2010482} and generalized MILP formulations with both continuous and discrete variables~\cite{Luathep2011GlobalOM}.
These works develop algorithms (e.g., cutting constraint methods) to obtain globally or near-globally optimal solutions.
Some work \cite{WANG2015213} addresses non-convex designs combining discrete link additions with continuous capacity expansion, solved to global optimality via linearization.
To capture randomness in travel choices, researchers have adopted stochastic UE and discrete choice models~\cite{ben1985discrete,Bierlaire2003BIOGEMEAF,Ricard202550YO,meister2023route,HUANG20201}.
Examples include stochastic UE for mixed traffic in autonomy-dedicated facility deployment~\cite{ZHANG2023102784}, and logit-based stochastic UE for modeling EV drivers' routing and recharging decisions in dynamic wireless charging systems~\cite{LIU2021102187}.
This body of work has yielded valuable insights for single-region network design, accounting for network-user coupling.
However, it typically assumes the network is managed by a single entity and does not fully capture interactions among multiple subnetworks.

\textit{Network-network interactions} occur when multiple subnetworks, often under different authorities, are interconnected and mutually influence each other's performance.
Each subnetwork designer may have distinct objectives, budgets, and operational constraints, yet their decisions are interdependent.
Such reasoning applies to various geographical scales, for instance, international networks that connect multiple countries~\cite{itf2023,liu2023global}, and urban-rural networks that involve different municipalities~\cite{porru2020smart}. 
When subnetworks are designed in isolation, inefficiencies are common.
Cross-border railway services, for instance, are often under-prioritized in national investments plans, resulting in lost demand and reduced competitiveness for rail~\cite{medeiros2019cross,cb2022}.
Some studies address these inefficiencies through centralized, integrated design.
For instance,~\cite{GROLLE2024103906} optimize European high-speed rail design and frequency setting from a continental perspective, emphasizing cross-border corridor investments.
Further,~\cite{wang2023integrated} design a multimodal freight network (rail, truck, maritime) to serve heterogeneous shipper needs at minimum total shipping cost, and~\cite{wang2022large} integrate road, transit, and bike subnetworks in a tri-level problem to maximize capacity while accounting for mode split and traffic assignment.
other work models subnetworks as controlled by self-interested stakeholders, leveraging game theory to study competitive dynamics.
Chow and Sayarshad ~\cite{chow2014symbiotic} design coexisting transportation networks (e.g., bike-sharing) via multi-objective optimization, considering mutual impacts.
Further~\cite{lanzetti2023interplay} propose a market model of interactions between autonomous mobility providers, public transport authorities, and customers, showing how ride-hailing influences urban mobility.
Bakhshayesh and Kebriaei~\cite{Bakhshayesh2023} proposed a generalized aggregative game to model the interactions of electric vehicles, a power distribution system operator, charging station operators, and a transportation network operator.  A decentralized learning algorithm is developed to reach the Wardrop equilibrium, and tested on Savannah’s transport model and the IEEE 33-bus network.

In conclusion, for network-network interactions, fully centralized cooperation is often unrealistic due to misaligned local incentives, while purely competitive designs can yield globally suboptimal results.
This has motivated interest in hybrid approaches that blend competition with selective cooperation, enabling joint action where it benefits both operators and users.
In this context, our work proposes a game-theoretic framework for interactive network design that explicitly incorporates both cooperative and non-cooperative elements.
By modeling co-investment opportunities and payoff-sharing mechanisms, our framework aims to identify, negotiate, and fairly allocate the gains from cross-network cooperation, providing actionable decision support in competitive multi-operator environments.

\subsection{Statement of Contribution}
This work makes three main contributions.
First, we introduce a unified game-theoretic framework for the interactive network design problem that explicitly captures both network-user and network-network interactions.
The formulation accommodates multiple decision-makers with distinct objectives, allowing for the systematic analysis of strategic behavior across interconnected subnetworks.
Second, we design cooperative mechanisms for co-investment and payoff-sharing that enable the identification and evaluation of mutually beneficial collaborations.
These mechanisms are structured to balance competitive incentives with cooperative gains, ensuring that improvements benefit both operators and end-users.
Finally, we validate the proposed framework through comprehensive case studies on the Sioux Falls and Zurich-Winterthur networks.
These case studies demonstrate the applicability, efficiency, and practical relevance of our approach, and provide actionable insights into when and how cooperation should take place, as well as how benefits can be fairly distributed while preserving the autonomy of regional stakeholders.

Building on our earlier work in~\cite{he25}\footnote{This paper was a Best Student Paper Award finalist at ACC 2025.}{},
this paper advances the network design game in several key dimensions. 
Theoretically, we generalize the standard non-cooperative framework and incorporate the cooperative mechanism, formalizing a hybrid game structure.  Within this framework, we provide a detailed model specification and derive its fundamental equilibrium properties, offering rigorous proofs for the existence of the \gls{acr:ne}.
On the practical side, we enrich the cooperative mechanism to account for two critical real-world factors, bargaining power and the potential for strategic exploitation, that strongly influence the distribution of cooperative gains.
These improvements allow the model to capture subtler and more realistic dynamics in multi-operator settings.
The improved framework is further validated through a real-world Zurich-Winterthur case study based on observed network and demand data, illustrating how these new elements can materially affect outcomes and offering decision-support tools for coordinated, user-centered improvements in complex, multi-operator systems.


The remainder of the paper is organized as follows. 
\cref{sec:Game-theoretical Framework} introduces network design game and key concepts. \cref{sec:coinvestment} details the co-investment with payoff-sharing mechanism. 
\Cref{sec:theoretical} analyzes the theoretical properties of the proposed mechanism.
\cref{sec:exp} reports the numerical experiments, and \cref{sec:conclusion} concludes the paper.
\section{Game-theoretical Framework for Network Design Problem} \label{sec:Game-theoretical Framework}
To capture the strategic interactions in the network design setting, 
we first introduce a general game-theoretical framework and define the notion of \gls{acr:ndg}, 
and then specify mobility systems, including networks, demand, and operators.

\subsection{Network Design Game} 
\label{sec:Network Design Game}
Consider a set of self-interested operators (players)~$\operatorSet=\{1,\ldots,N\}$, each controlling a subset of components within a shared network.
The network is represented by an edge-labeled directed graph~$\graph=\tup{\nodes,\edges,\labelsmap}$, where~$\nodes$ is the set of vertices,~$\edges \subseteq \nodes \times \nodes$ is the set of directed edges and~$\labelsmap: \edges \to  \labels $ is a mapping from the set of edges $\edges$ to the set of edge labels $\labels$.
Each operator~$i$ acts on a local subgraph~$\graph_i\subseteq \graph$ (i.e., regions) and aims to maximize a payoff function.
\begin{definition}[Network Design Game] \label{def:NDG}
A \emph{network design game} is defined by the tuple~$\networkgame=\tup{\operatorSet, \graph, \demandmodel, \tup{\actionspacei, \payoffi, \budgeti}_i}$, 
where $\mathcal{I}$ denotes the set of self-interested operators, indexed by $i$, $\graph$ denotes the overall mobility network.
The travel demand model is given by $\demandmodel$.
Each operator $i \in \operatorSet$ is characterized by a tuple $\tup{\actionspacei, \payoffi, \budgeti}$, where: 
\begin{itemize}
    \item $\actionspacei:=\{0,1\}^{\vert\edges_i\vert} \times \nonnegativenumbers^{\vert\edges_i\vert}$ is the strategy space of operator $i$, where $\vert\edges_i\vert$ are the number of edges in local network $\graph_i$. These are binary decisions for edge constructions and non-negative continuous decisions for edge capacities. 
    \item $\payoffi : \actionspacei \times \mathbb{R}_{\geq 0}^{\vert\edges\vert} \to \nonnegativenumbers$ denotes the payoff function of operator~$i$, which maps the design strategy $h_i \in \actionspacei$ and a vector of non-negative edge flows $y \in \mathbb{R}_{\geq 0}^{|\edges|}$ to a non-negative payoff value. 
    \item $\budgeti \in \nonnegativenumbers$ denotes the total budget of operator~$i$ for infrastructure development.
\end{itemize}
Given the strategies~$h_{-i}$ of all other operators, each operator $i \in \operatorSet$ solves the following optimization problem:
\begin{subequations}\label{eq:ndg}
    \begin{align}
    \max_{h_i \in \actionspacei} \quad & f_i\tup{h_i,y}\\
    \text{s.t.} \quad
     &y=\demandmodel \tup{h_i,h_{-i},\graph}\\
     &b_i \tup{h_i} \leq \budgeti,
\end{align}
\end{subequations}
where the function $b_i: \actionspacei \rightarrow \nonnegativenumbers$ maps a specific strategy to its implementation cost.
And $\demandmodel: \actionspace \times \graphspace \rightarrow \nonnegativeintegernumbers^{|\edges|}$ maps from a network design strategy profile and a graph to the vector of edge flows,
where $\graphspace$ denotes the space of network graphs. 
The vector $y=\tup{y_e}_{e \in \edges} \in \nonnegativeintegernumbers^{|\edges|}$ represents the served flow on all edges.

\end{definition}

A key solution concept in game theory is the \gls{acr:ne}~\cite{nash1950equilibrium}, leveraged to study interactions among rational players.
For a \gls{acr:ndg}, the network at \gls{acr:ne} is a stable outcome where no operator can improve their payoff by unilaterally deviating from their network design strategies.

\begin{definition}[$\text{Nash Equilibrium}$ of \gls{acr:ndg}] \label{def:ne}
A strategy profile $\tup{h_i,h_{-i}}$ is a \emph{Nash Equilibrium} of the \gls{acr:ndg} if, for every operator $i \in \operatorSet$, 
$f_i\tup{h_i,y} \geq f_i \tup{h'_i,y'}, \ \forall h'_i \in \actionspacei$,
where $y=\demandmodel \tup{h_i,h_{-i},\graph}$, and $y'=\demandmodel \tup{h'_i,h_{-i},\graph}$.
\end{definition}

This concept serves as a reference point for identifying inefficiencies and for designing cooperative mechanisms.

\begin{remark} [Generality of the framework]
The framework captures both network–network interactions (among operators) and network–user interactions (users responding to design through the demand model~$\demandmodel$).
It can generalize across infrastructure domains (e.g., transportation, energy, communications) involving stakeholders with distinct objectives and budget constraints.
Operators may control geographically adjacent subnetworks (multi-region systems~\cite{cats2025shift}) or overlapping layers (multimodal systems~\cite{Bakhshayesh2023}).
We assume perfect information, so all operators observe each other's strategies, reasonable for public infrastructure.
\end{remark}

The resulting equilibrium properties depend on the specific functional forms of the demand model and the operators’ objectives. To demonstrate the adaptability of the proposed framework, we present a tractable instantiation comprising: i) a two-region graph, ii) a discrete-choice-based travel demand model~\cite{ben1985discrete, Bierlaire2003BIOGEMEAF}, and iii) operator-specific strategies and payoff functions.

\subsection{Mobility Systems} 
\label{sec:Mobility Systems}
\subsubsection{Mobility Network} 
\label{sec:Mobility Network}

We address the detailed modeling of the mobility network {\small $\graph=\tup{\nodes,\edges, \labelsmap}$}.
For each edge~$e$, we assign a label~$\ell(e)=\tup{x_e,c_e,l_e,t_e}\in \labels=\set{0,1} \times \tup{\nonnegativenumbers \cup \{\infty\}}\times \nonnegativenumbers \times \nonnegativenumbers$, characterized by the availability of the mobility service on edge $x_e$, the capacity on the edge $c_e$, the edge length $l_e$, and the travel time associated to the edge~$t_e$.
\begin{figure}[tb]
    \centering
    \includegraphics[width=0.8\linewidth]{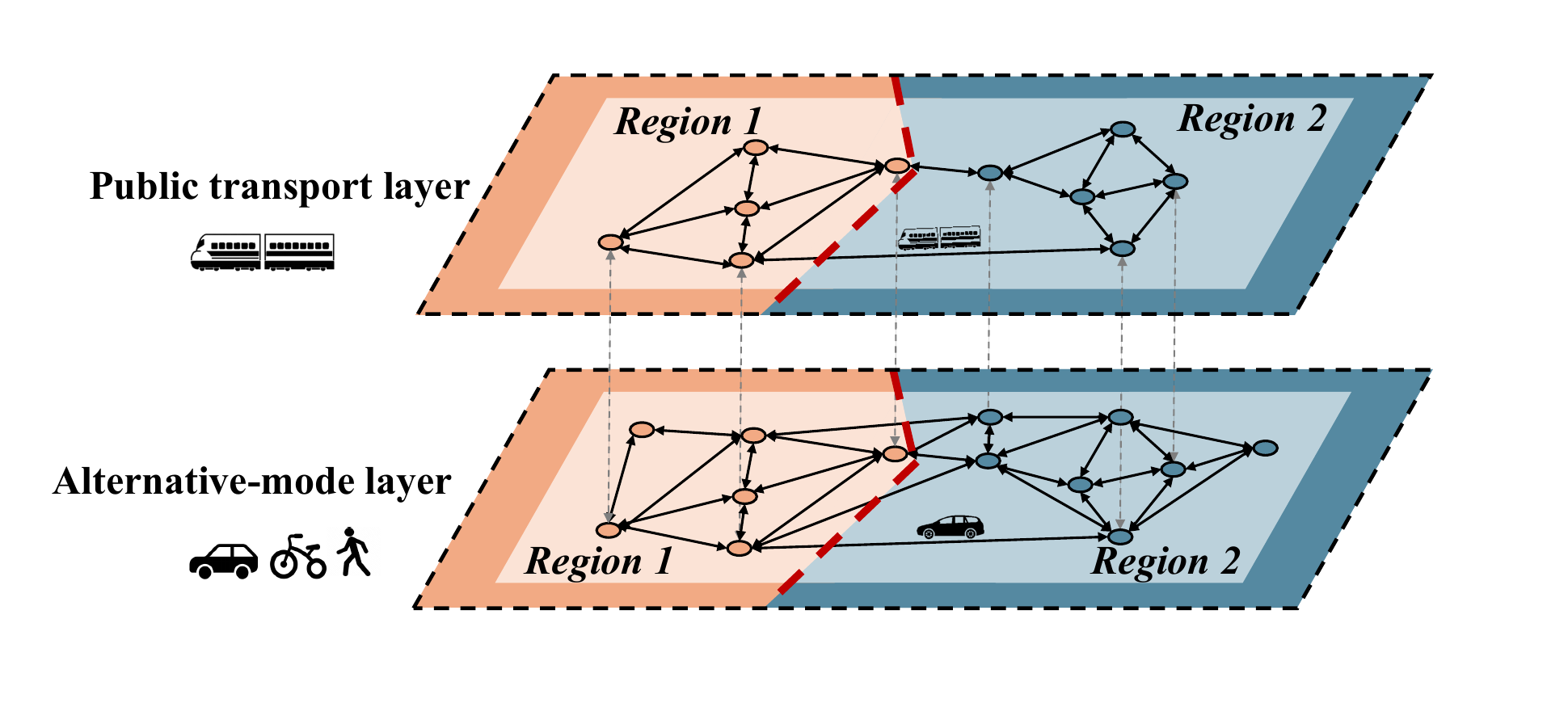}
    \caption{Multimodal mobility network for Region 1 and Region 2.}
    \label{fig:graph}
\end{figure}

\paragraph{Region Partition}
We assume that there are two regional operators $i \in \operatorSet= \{1,2\}$, as shown in \Cref{fig:graph}.
The graph~$\graph$ can be divided into two subgraphs {\small $\graph^1 = \tup{\nodes^1, \edges^1, \labelsmap^1}$} and {\small  $\graph^2 = \tup{\nodes^2, \edges^2, \labelsmap^2}$} corresponding to two regions (denoted Region 1 and Region 2 for simplicity), where {\small $\nodes^1$ and $\nodes^2$} are disjoint subsets of {\small $\nodes$} satisfying {\small $\nodes= \nodes^1 \cup \nodes^2$} and {\small $\nodes^1 \cap  \nodes^2 = \emptyset$}. The sets of edges for the subgraphs are defined as follows. 
The edge set of Region $i$ is {\small $\edges^i=\{\tup{u,v} \in \edges|u,v \in \nodes^i\}$}, and the region-connecting edge set is defined as 
\begin{equation*}
    \edges^c=\{\tup{u,v} \in \edges| u \in \nodes^i, v \in \nodes^j, i,j \in \operatorSet, i \neq j \}.
\end{equation*}
The edge sets satisfy {\small $\edges= \edges^1 \cup  \edges^2  \cup \edges^c $ and $\edges^1 \cap  \edges^2 \cap \edges^c = \emptyset$}. 
This partition allows each regional network to be designed by regional operators while maintaining the overall connectivity of the mobility network.
\paragraph{Multimodal Mobility}
To enable multimodal mobility choices, each regional subgraph {\small $\tup{\graph^i}_{i \in \{1,2\}}$} contains a \gls{acr:pt} network layer {\small $\graph^i_{P}= \tup{\nodes^i_{P}, \edges^i_{P}, \labelsmap^i_{P}}$} and an alternative-mode network layer {\small $\graph^i_{A}= \tup{\nodes^i_{A}, \edges^i_{A}, \labelsmap^i_{A}}$}, which we assume represents an aggregated layer for other transportation modes such as private vehicles, bikes and walking, where~{\small $\nodes^i_{P} \cap \nodes^i_{A} = \emptyset$}. 
\gls{acr:pt} networks are characterized by stations {\small $u \in \nodes^i_{P}$} and line segments {\small $\tup{u,v} \in \edges^i_{P}$}; the network for the alternative-mode layer is modeled by intersections $u \in \nodes^i_{A}$ and link segments {\small  $\tup{u,v} \in \edges^i_{A}$}.
The mode-transfer edges set is represented as {\small $\edges^i_{C} \subseteq \nodes^i_{P} \times \nodes^i_{A} \cup \nodes^i_{A} \times \nodes^i_{P}$}, allowing the switch of transportation mode during a single trip. 
Similarly, the region-crossing edge set consists of three subsets of edges: \gls{acr:pt} edges, alternative-mode edges and mode-transfer edges, i.e., {\small $ \edges^c = \edges^c_{P} \cup \edges^c_{A} \cup  \edges^c_{C}$}.
Given the above definitions, it holds that  {\small  $\nodes =  \nodes^1_{P} \cup    \nodes^1_{A} \cup  \nodes^2_{P} \cup  \nodes^2_{A} $ }and {\small $ \edges =  \edges^1_{P} \cup   \edges^1_{A}\cup   \edges^1_{C} \cup  \edges^2_{P} \cup   \edges^2_{A} \cup   \edges^2_{C}\cup  \edges^c_{P} \cup   \edges^c_{A} \cup   \edges^c_{C}$}. By defining subgraphs for regions and layers for \gls{acr:pt} and alternative modes, the framework supports modeling multiregion, multimodal transportation networks.
Local operators manage their regional networks, while users can travel across regions and use multiple transportation modes.

\subsubsection{Travel Demand} \label{sec:Travel Demand}
We address the modeling of traveler decision-making to derive the specification of the demand model $\demandmodel$.
\paragraph{Choice Modeling}
Let $\mathcal{M}$ denote the set of travel requests.
Each request $m \in \mathcal{M}$ is defined as $r_m=\tup{o_m,d_m,\alpha_m} \in \mathcal{R} = \nodes_{A} \times \nodes_{A} \times \mathbb{N}_0$, where $o_m$ and $d_m$ denote the origin and the destination, respectively, and $\alpha_m$ denotes the number of trips associated with the request.

For each travel request, two route options are available: a \gls{acr:pt}-prioritized route and a route based on alternative transportation modes (referred to as the \emph{alternative} route hereafter).
The \gls{acr:pt}-prioritized route prioritizes the use of \gls{acr:pt} services, with alternative modes being used only when \gls{acr:pt} is not available. 
In contrast, the alternative route relies exclusively on other transportation modes.
For the request $r_m$, we assume that a proportion $p_m \in [0, 1]$ of trips will choose the \gls{acr:pt}-prioritized route, which is determined by: 
\begin{align}
&p_{m} = \frac{e^{u^P_m}}{e^{u^A_m} + e^{u^{P}_m}}, \quad  \forall m \in \mathcal{M},\label{eq:p_m}
\end{align}
where  $u^{P}_m$ and $u^A_m$ denote the travel utilities of the \gls{acr:pt}-prioritized route and the alternative route, respectively.

\paragraph{Utility Function}
Travelers evaluate the utilities of both routes based on the travel time and service price:
\begin{subequations}\label{eq:u}
\begin{align}
u^A_m = & - \sum_{e \in \edges^A_m} l_e 
    \tup{
        \frac{\gamma_\mathrm{vot}}{v_A} + \gamma^\mathrm{A}_{1} 
    }, \quad \forall m \in \mathcal{M} \label{eq:uA}, \\
u^P_m = & - 
     \sum_{e \in \edges^P_m} 
     \Bigl[
     x_e l_e 
    \tup{
        \frac{\gamma_\mathrm{vot}}{v_P} + \gamma_{1}^\mathrm{P}
    } \nonumber\\
    & - \tup{1-x_e} \sum\limits_{a \in \edges^A_e}  l_a
    \tup{
        \frac{\gamma_\mathrm{vot}}{v_A} + \gamma_{1}^\mathrm{A}
    }
    \Bigr] , \quad \forall m \in \mathcal{M} \label{eq:uR} ,
\end{align} 
\end{subequations}
where the edge sets for such routes are denoted as $\edges^P_m$ and $\edges^A_m$, respectively. 
$l_e$ is the travel distance on edge $e$, and $\gamma_\mathrm{vot}$ is the value of time.
In \cref{eq:uR}, the utility is determined by the availability of PT service, represented by $x_e$. When the service is available ($x_e = 1$), the first term calculates the utility associated with \gls{acr:pt} travel.
When the \gls{acr:pt} service is unavailable ($x_e = 0$), travelers instead switch to alternative edges.
In this case, $\edges^A_e$ represents the set of alternative-mode edges used as a substitute for \gls{acr:pt} edge $e$.
The distance-based prices and average speeds are 
$(\gamma_{1}^\mathrm{P}, v_P)$ for \gls{acr:pt} service and $(\gamma_{1}^\mathrm{A}, v_A)$ for the alternative mode.
For the scope of this study, the value of time, service prices, and speeds ($\gamma_\mathrm{vot}, \gamma_{1}^A, \gamma_{1}^P, v_A, v_P$) are assumed to be constant.

\paragraph{Capacity-Constrained Demand Model}
We assume that the \gls{acr:pt} edge flow depends on both the potential demand $\tilde{y}_e$, defined as the total flow intending to use edge $e$ based on route choices, and the edge capacity $c_e$, which imposes an upper bound on the edge flow.
The potential \gls{acr:pt} demand $\tilde{y}_e$ can be calculated by:
$$\tilde{y}_e = \sum\limits_{m \in \mathcal{M}}  \mathds{1}_{e \in \edges^P_m} \alpha_m p_m, \quad \forall e \in \edges^P,$$
where $\mathds{1}_{e \in \edges^P_m}$ equals 1 if edge $e$ belongs to the \gls{acr:pt}-prioritized route for request $m$, and 0 otherwise.

The realized \gls{acr:pt} flow $y_e$ is subsequently affected by the edge capacity $c_e$. For \gls{acr:pt} edges, the flow is capped at $c_e$, forcing any excess demand to switch modes. 
For alternative edges, the realized flow consists of two components: travelers who initially choose the alternative mode and demand diverted from \gls{acr:pt} services due to capacity constraints.
\begin{equation}
y_e =\!
\begin{cases}
    \min(\tilde{y}_e, c_e), & \text{if } e \in \edges_P, \\[8pt]
    \displaystyle \!\!\sum_{m \in\! \mathcal{M}}\!\! \mathds{1}_{e \in \edges^A_m} \alpha_m (1\!-\!p_m) \!+ \!\!\sum_{a \in \edges^P_e} (\tilde{y}_a \!-\! c_a)^+, & \text{otherwise},
\end{cases}
\label{eq:y}
\end{equation}
\noindent where 
$\mathds{1}_{e \in \edges^A_m}$ indicate whether edge $e$ is within the alternative route. $\edges^P_e$ denotes the set of \gls{acr:pt} edges for which alternative edge $e$ serves as a substitute. The operator $(\tilde{y}_a - c_a)^+$ quantifies the capacity spillover from these edges to alternative edge $e$.
%

We adopt the user decision-making model in \cref{eq:y} for the subsequent analysis. 
An alternative formulation incorporating congestion effects is provided in Appendix~\ref{appendix:traffic assignment}. 
Under that formulation, the problem becomes a multi-leader–multi-follower Stackelberg game, with operators as leaders and travelers as followers;
Further discussion can be found in our related work~\cite{he2025hierarchical}.

\subsubsection{Self-interested Operators} \label{sec:Operators}

Regional operators make decisions independently for their respective networks and may adopt different objectives and performance evaluation criteria. 
We therefore introduce the specification of the tuple $\tup{\actionspacei, \payoffi, \budgeti}$ for each operator $i \in \operatorSet$.
Their decision-making is focused on outcomes within their own regions, rather than the impacts on the broader system.

\paragraph{Decision Variables}
A strategy of by operator $i$ is denoted by $h_i := \{\tup{d_e, s_e}\}_{e \in \edges_i} \in \mathcal{H}_i$, where $\edges_i \subseteq \edges$ represents the subset of the network edges under the control of operator~$i$. 
For each edge $e \in \edges_i$, the binary variable $d_e \in \{0,1\}$ represents the construction decision, where $d_e = 1$ indicates that edge $e$ is chosen to be constructed.
The variable $s_e \in [0, s_{\max}]$ denotes the service frequency assigned to edge $e$, , bounded by the maximum allowable frequency $s_{\max}$.

These decisions directly modify the edge labels of the graph $\graph$. 
Recall from \Cref{sec:Mobility Network} that each edge is associated with a label $z_e = (x_e, c_e, l_e, t_e)$. The strategy $h_i$ updates the first two components: the construction status $x_e$ and the capacity $c_e$.
Let $\graph_{\text{in}}$ denote the input network configuration with edge availability and capacity given by $\{\hat{x}, \hat{c}\}$.
We define the network state transition $\mathcal{T}: (\graph_{\text{in}}, h_i) \to \graph_{\text{out}}$, where the post-game network $\graph$ with edge availability and capacity $\{x, c\}$, is determined by:
\begin{subequations} \label{eq:state_transition}
\begin{align}
    x_e &= \hat{x}_e + d_e, \label{eq:trans_x}\\
    c_e &= \hat{c}_e + \kappa s_e. \label{eq:trans_c}\\
    x_e &\leq s_e \leq x_e \Omega, \label{eq:trans_consistency}
\end{align}
\end{subequations}
\noindent Equation~\eqref{eq:trans_x} updates the topology. Equation~\eqref{eq:trans_c} establishes that the effective capacity $c_e$ scales linearly with the service frequency $s_e$ via the coefficient $\kappa$.
Constraint~\eqref{eq:trans_consistency} enforces logical consistency using the Big-M method (where $\Omega$ is a large positive constant), ensuring that positive service frequency $s_e$ can only be assigned if the edge is active ($x_e=1$).
%
%

\paragraph{Performance Metrics}
The performance of mobility networks can be evaluated from multiple perspectives. 
We assume that regional operators will consider the environmental, social, and economic impacts. 
Specifically, operator $i$ quantifies CO\textsubscript{2} emissions, total travel costs, and profitability generated within its own region, denoted by $J^{e}_i$, $J^{c}_i$, and $J^{p}_i$, respectively. 
These performance metrics depend not only on the network design of their region but also on the network of the other one.  
This interdependence is captured by edge flows $y = \demandmodel \tup{h_i, h_{-i}, \graph}$, where $h_i$ denotes the network design strategy of regional operator~$i$, and $h_{-i}$ represents the strategy of the other region.
Thus, travelers' choices are influenced by overall network strategies and the existing network layout (see \cref{eq:y}).
The payoff function for operator~$i$ is then given by:
\begin{align}
    & f_i\tup{h_i, y}=
    - \omega^e_i J^e_i \tup{y} 
    - \omega^c_i J^{c}_i\tup{y} 
    + \omega^p_i J^p_i\tup{h_i, y}, \label{eq:f}
\end{align}
where $\weighte, \weightc, \weightp \in \nonnegativenumbers^3$ are weights reflecting the relative importance that operator~$i$ assigns to environmental impact, travel cost, and profitability, respectively.
Performance metrics can be determined by:
\begin{align}
    &J^{e}_i\tup{y}= 
        \sum_{e \in \edges^i_{P}} \gamma_{2}^\mathrm{P} l_e y_e
        + \sum_{e \in \edges^i_A} \gamma_{2}^\mathrm{A} l_e y_e, \nonumber\\
    &J^{c}_i\tup{y}=  
        \sum_{\mathclap{e \in \edges^i_{P}}} l_e y_e \tup{\frac{\gamma_\mathrm{vot}}{v_P}\!+\!\gamma_{1}^\mathrm{P}}
         \!+\!\sum_{\mathclap{e \in \edges^i_A}} l_e y_e \tup{\frac{\gamma_\mathrm{vot}}{v_A}\!+\!\gamma_{1}^\mathrm{A}}, \nonumber\\
    &J^p_i\tup{h_i, y} = 
        \sum_{e \in \edges^i_{P}}  \gamma_{1}^\mathrm{P} l_e y_e 
        - b_i(h_i),\nonumber
\end{align}
System emission $J^e_i$ accounts for both the \gls{acr:pt} service and alternative services, positively related to the volume of travel demand and distance traveled.
The parameters $\gamma_{2}^\mathrm{P}$ and $\gamma_{2}^\mathrm{A}$ denote the CO\textsubscript{2} emission unit for \gls{acr:pt} and alternative services, respectively. 
The total travel cost $J^{c}_i$ is the travel cost generated from the requests within the region $i$.
Profitability $J^p_i$ of the local network designer is the gap between the revenue from the \gls{acr:pt} service and the construction cost. 
Revenue is calculated from the flow over all local edges, which considers the service price, the length of the edge, and the flow. 

Construction cost $b_i(h_i)$ includes both the base costs and the costs associated with upgrading the service capacity, which is determined by:
\begin{align}
    b_i(h_i) = \sum_{e \in \edges^i_{P}} c^{b} l_e d_e +c^{k} l_e s_e
    \label{eq:cost}
\end{align}
The parameters $c^{b}$,~$c^{k}$ are the unit costs of line construction and capacity enhancement, respectively.


\subsection{Model Instantiation}
By instantiating the general game-theoretic framework in Definition~\ref{def:NDG} with the mobility and operator models from Section~\ref{sec:Mobility Systems}, we obtain the explicit optimization problem for the \gls{acr:ndg}.
In the non-cooperative setting, operators plan simultaneously to maximize individual payoffs. Each operator $i \in \operatorSet$ acts independently, treating others’ strategies $h_{-i}$ as fixed.
The resulting strategic local optimization problem~$\mathsf{Loc}_i$ for operator~$i$ is formulated as:
\begin{subequations} \label{problem:noncoop}
\begin{alignat}{2}
\max_{h_i \in \mathcal{H}_i} \quad 
    & f_i\bigl(h_i, y\bigr) 
    & \qquad & \text{\footnotesize (see \cref{eq:f})} \label{eq:opt_obj}\\
\text{s.t.} \quad 
    & y = \demandmodel\bigl(h_i, h_{-i}, \graph\bigr) 
    & & \text{\footnotesize (see \cref{eq:y}, (\ref{eq:state_transition}))} \label{eq:opt_demand}\\
    & b_i(h_i) \leq \budgeti 
    & & \text{\footnotesize (see \cref{eq:cost})} \label{eq:opt_budget}\\
    & h_i = \{(d_e, s_e)\}_{e \in \edges_P^i}  \\
    & d_e \in \{0,1\}, \ s_e \in [0, s_{\max}] 
    & & \forall e \in \edges^i_P, \label{eq:opt_vars}    
\end{alignat}
\end{subequations}
\noindent where the objective function \eqref{eq:opt_obj} integrates regional environmental, social, and economic goals; 
constraint \eqref{eq:opt_demand} represents the mapping of network design to edge flows; 
and constraint \eqref{eq:opt_budget} enforces the financial budget for construction and capacity expansion.

\begin{remark}[Solving the strategic local optimization problem]It is important to note that the formulation in \eqref{problem:noncoop} is not a classical isolated optimization problem due to the strategic interaction captured in constraint \eqref{eq:opt_demand}, where the demand flows $y$ depend on the strategies of other agents ($h_{-i}$).
The collection of these problems across all $i \in \operatorSet$ constitutes the \gls{acr:ndg}. 
However, to find the equilibrium network, we can employ the Iterative Best Response algorithm. Within each iteration of this algorithm, we fix the strategies of other operators $h_{-i}$; under this assumption, the problem reduces to an \gls{acr:minlp}.
\end{remark}
\begin{remark}[Problem complexity]
The resulting subproblem for operator $i$ is an \gls{acr:minlp} featuring binary decision variables $\set{d_e}_{e \in \edges_i}$ for edge existence and continuous variables $\set{s_e}_{e \in \edges_i}$ for capacity. 
For each operator $i$, the optimization problem scales with $\mathcal{O}\tup{|\edges_P^i|}$ decision variables and $\mathcal{O}\tup{|\edges_P^i|+|\mathcal{M}'^i|}$ constraints, where $|\edges_P^i|$ denotes the number of \gls{acr:pt} edges in the region $i$. 
$|\mathcal{M}'^i|$ is the set of relevant travel requests, defined as requests originating or terminating in region $i$:
    \begin{equation*}
        |\mathcal{M}'^i|= \sum_{r_m \in \mathcal{R}} \alpha_m \mathds{1}\{o_m \in \nodes^i \vee d_m \in \nodes^i \}.
    \end{equation*}
\end{remark}

\section{Co-investment with Payoff Sharing}\label{sec:coinvestment}
\begin{figure}[tb]
    \centering
    \includegraphics[width=0.8\linewidth]{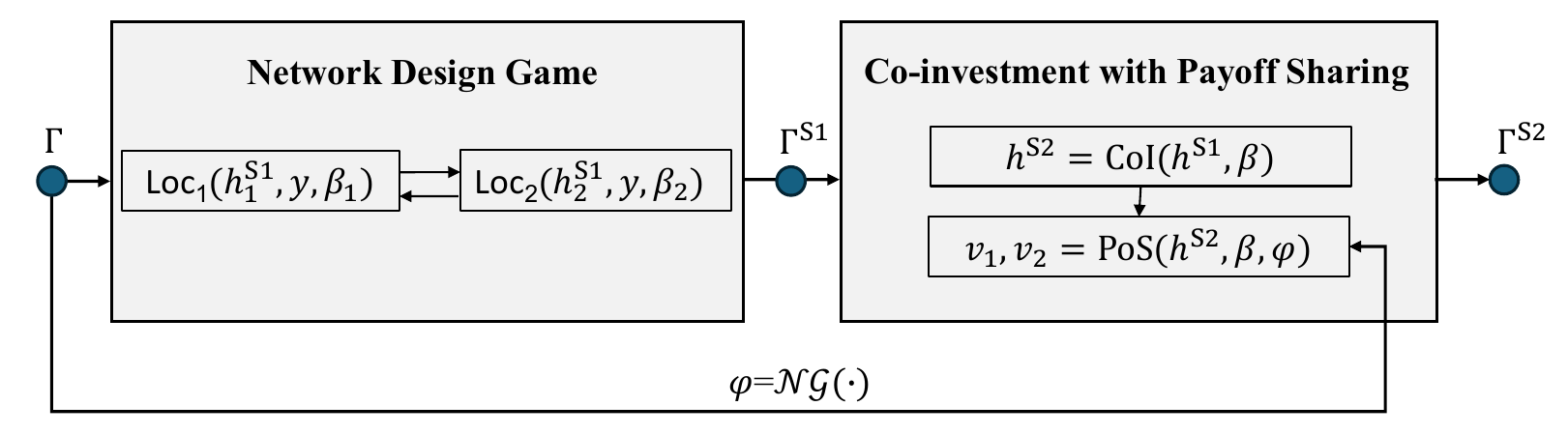}
    \caption{The proposed network design approach incorporates a cooperative network design stage involving co-investment and payoff sharing, which follows the non-cooperative \gls{acr:ndg}. Here, operators can decide whether to invest jointly in the network, determine their individual contributions, and agree on the resulting payoff allocation.}
    \label{fig:model}
\end{figure}
Traditional network design approaches often adopt one of two extremes: a \emph{purely competitive} setting, where operators act independently without coordination, or a \emph{fully cooperative} setting, where they commit to joint planning from the beginning.
In practice, the most promising opportunities for improving overall system performance often lie between these extremes.
We therefore assume that cooperation is voluntary and arises only when it is individually rational, i.e., when operators attain payoffs at least as large as those obtained by acting independently. 
Acting independently corresponds to allocating the full available budget to a non-cooperative \gls{acr:ndg}. 

We then propose a hybrid framework that blends competition and cooperation through a two-stage process.
At its core is a co-investment with payoff-sharing mechanism designed to encourage \emph{strategic collaboration} among self-interested operators, enabling joint ventures that can improve outcomes for both operators and users.

\begin{definition}[Two-Stage Coopetitive NDG] \label{def:NDG_coop}
The \emph{Coopetitive Network Design Game} is a sequential game framework that integrates non-cooperative competition and cooperative investment. Operators split their total budget $B_i$ into two components, where $\beta_i \in [0,1]$ denotes the fraction reserved for cooperation. The game proceeds in two stages:

\begin{enumerate}
    \item \textbf{Stage 1 (Non-Cooperative Design):} Operators are involved in a non-cooperative NDG (in Equations~\eqref{eq:ndg}) subject to a restricted individual budget $(1-\beta_i)B_i$. This results in an equilibrium graph $\graph^{S1}$. 

    \item \textbf{Stage 2 (Co-investment and Payoff Sharing):} Operators pool their remaining resources, $\sum_{i \in \mathcal{I}} \beta_i B_i$, to jointly optimize the network and redistribute the resulting gains. 

    \textit{Co-investment:} Operators collectively determine a joint strategy profile $h^{\mathrm{S_2}}$ to maximize the aggregate payoff, conditional on the equilibrium graph from Stage 1:
        \begin{subequations}
        \begin{align}
        \max_{h^{\mathrm{S_2}}\in \actionspace} \quad & \sum_{i \in \mathcal{I}} f_i \left(h^{\mathrm{S_2}}_i, y\right)\\
        \text{s.t.} \quad
         &y=\demandmodel \tup{h^{\mathrm{S_2}},\graph^{\mathrm{S_1}}}\\
         &\sum_{i \in \mathcal{I}} b_i \tup{h^{\mathrm{S_2}}_i} \leq \sum_{i \in \mathcal{I}} \beta_i B_i,
        \end{align}
        \end{subequations}

    \textit{Payoff Sharing:} The total cooperative utility is distributed according to a \gls{acr:nbs}, ensuring that no operator is worse off than in the pure non-cooperative outcome. 
\end{enumerate}
\end{definition}

%
%

\Cref{fig:model} summarizes the proposed approach.
Starting from an initial network $\graph$, Stage 1 models a non-cooperative \gls{acr:ndg}, yielding a \gls{acr:ne} strategy profile $h^\mathrm{S1}$ and the Stage 1 network layout $\graph^\mathrm{S1}$.
In Stage 2, operators jointly design network expansions through a cooperative investment model~$\mathsf{Col}$, and the resulting benefits are allocated via the payoff-sharing mechanism~$\mathsf{PoS}$. The result is the final network configuration~$\graph^\mathrm{S2}$.

\begin{remark}[Comparison with non-cooperative NDG] \label{rem:comparison}
The Coopetitive NDG generalizes the standard non-cooperative framework in Section~\ref{sec:Network Design Game}. 
If the co-investment budget fraction is set to $\beta_i = 0$ for all $i$, the game reduces strictly to the non-cooperative \gls{acr:ndg} defined in Definition~\ref{def:NDG}.
In addition, the two-stage formulation offers distinct advantages:
\begin{enumerate}
    \item \textbf{Resource Pooling:} 
    Unlike the strategic local optimization in \cref{eq:ndg}, where the decision space and objectives are confined to regional subnetworks, Co-investment extends the design scope to the overall network. This mitigates myopic, region-centric planning and allows the optimization to target system-wide efficiency. In addition, Co-investment relaxes budget restriction by pooling resources ($\sum b_i \leq \sum \beta_i B_i$). This enables the financing of high-cost, high-impact projects that would be financially infeasible for a single operator acting alone.

    \item \textbf{Pareto Improvement:} 
    The payoff-sharing mechanism explicitly seeks Pareto-superior outcomes over the non-cooperative \gls{acr:ne}. The framework guarantees that the resulting network configuration is at least as beneficial to every stakeholder as the purely competitive outcome.
\end{enumerate}
\end{remark}

\subsection{Design Stage 1: Non-cooperative \gls{acr:ndg}} \label{sec:Design Stage 1}
In the first design stage, operators act independently, without negotiation or coordination. 
They engage in a non-cooperative \gls{acr:ndg}, as defined in Definition~\ref{def:NDG}, where each operator optimizes its own network investments.

The decision-making process for operator $i$ is modeled by the optimization problem formulated in \cref{problem:noncoop}, with the decision variable specifically denoted as $h^{S1}_i$ to represent the strategy of operator $i$ in Stage 1.
Since operators reserve a portion of their resources for the potential cooperative stage, the original budget constraint \eqref{eq:opt_budget} is replaced by the following reduced-budget constraint:
\begin{align}
    b_i(h^{S1}_i) \leq (1-\beta_i) \budgeti, \label{eq:opt_s1_budget}
\end{align}

\noindent where $(1-\beta_i) \budgeti$ represents the funds explicitly allocated to the non-cooperative stage.
The remaining constraints remain identical to those in \cref{problem:noncoop}.

We denote the solution to this game as the Stage 1 equilibrium profile $h^{\mathrm{S1}} = (h^{\mathrm{S1}}_1, h^{\mathrm{S1}}_2)$. This profile generates the equilibrium network configuration $\graph^{\mathrm{S1}}$, which serves as the design starting point for the co-investment in Stage 2.

\subsection{Design Stage 2: Co-investment with Payoff-sharing}\label{sec:Design Stage 2}
In Stage 2, operators may negotiate joint projects, pooling portions of their budgets to fund and design \gls{acr:pt} services.
We propose a cooperative co-investment with payoff-sharing mechanism that (i) enables operators to jointly design and finance network expansions, and (ii) allocates the cooperative gains among participants.
The co-investment step seeks a network design that maximizes the \emph{sum} of operator payoffs.
The payoff-sharing steps distribute these gains so that no operator is worse off than in the non-cooperative outcome.

\subsubsection{Co-investment Optimization}

The co-investment focuses on maximizing system-wide benefits through joint optimization.
Let~$h^{\mathrm{S2}}= (h_1^{\mathrm{S2}}, h_2^{\mathrm{S2}}) \in \mathcal{H}$ denote the joint strategy profile in Stage 2, where each component $h_i^{\mathrm{S2}}$ represents the design decisions (construction and capacity) specific to the subnetwork controlled by operator $i$.
The co-investment problem ~$\mathsf{Col}$ maximizes the aggregated payoffs of both operators, subject to the pooled budget and the network constraints:
\begin{subequations} \label{problem:coop_coinvest}
\begin{alignat}{2}
    \max_{h^{\mathrm{S2}} \in \mathcal{H}} \quad 
    & \sum_{i \in \mathcal{I}} f_i\tup{h_i^{\mathrm{S2}}, y} 
    & \qquad & \text{\footnotesize (see \cref{eq:f})} \label{eq:opt_obj_s2}\\
\text{s.t.} \quad 
    & y = \demandmodel\bigl(h^{\mathrm{S2}}, \graph^{\mathrm{S1}}\bigr) 
    & & \text{\footnotesize (see \cref{eq:y})} \label{eq:opt_demand_s2}\\
    & \sum_{i \in \mathcal{I}} b_i(h_i^{\mathrm{S2}}) \leq \sum_{i \in \mathcal{I}} \beta_i \budgeti 
    & & \text{\footnotesize (see \cref{eq:cost})} \label{eq:co_budget}\\
    & \graph^{\mathrm{S2}} = \mathcal{T}(\graph^{\mathrm{S1}}, h)
    & & \text{\footnotesize (see \cref{eq:state_transition})} \label{eq:co_state_transition} \\
    & h_i^{\mathrm{S2}} = \{(d_e, s_e)\}_{e \in \edges_P^i} \nonumber \\
    & d_e \in \{0,1\}, \ s_e \in [0, s_{\max}] 
    & & \forall e \in \edges^i \label{eq:opt_vars_s2}
\end{alignat}
\end{subequations}
\noindent where the objective \eqref{eq:opt_obj_s2} sums the individual payoffs of both operators.
Constraint~\eqref{eq:co_budget} ensures that the total cost across regions does not exceed the pooled cooperative funds.
Equations $\tup{\ref{eq:co_state_transition}}$ ensures that the design decision in Stage 2 ($h^{\mathrm{S2}}$) builds upon the network from Stage 1 ($\graph^{\mathrm{S1}}$).

For further analysis, we define the \textit{co-investment ratio (CIR)} to represent the proportion of the total design budget allocated to co-investment
(${\sum_{i \in \mathcal{I}} \beta_i B_i}/{B}$). 

\begin{remark}[Problem complexity]
Problem (\ref{problem:coop_coinvest}) is an \gls{acr:minlp}, with binary decision variables $\set{d_e}_{e \in \edges_P}$ and continuous decision variables $\set{s_e}_{e \in \edges_P}$. 
The problem involves $\mathcal{O}(|\edges_P|)$ decision variables and $\mathcal{O}(|\edges_P|+|\mathcal{M}|)$ constraints, where $|\edges_P|$ denotes the number of \gls{acr:pt} edges in the overall \gls{acr:pt} network. 
$|\mathcal{M}|$ is the total number of trips.
\end{remark}

\subsubsection{Payoff-sharing Optimization}
To allocate the benefits generated from the cooperative network design, we propose a mechanism based on the \gls{acr:nbs}~\cite{nash1950bargaining}. This approach ensures a fair and efficient distribution of the surplus while respecting the individual rationality of the operators.

Consider a standard bargaining problem where a set of agents $\mathcal{I}$ must agree on how to split a total shareable payoff $S \in \nonnegativenumbers$. Let $v \in \nonnegativenumbers^{|\mathcal{I}|}$ denote the vector of final payoffs. The negotiation is constrained by a disagreement point $\varphi$, which represents the minimum guaranteed payoff each agent receives if negotiations fail. 
The \gls{acr:nbs} can be obtained by solving the following optimization problem:
\begin{subequations}
\label{opt:payoffsharing_standard}
\begin{align}
    \max_{v_i \in \nonnegativenumbers} & \prod_{i \in \mathcal{I}} (v_i - \varphi_i)^{\alpha_i} \\ \text{s.t.} &\quad v_i \geq \varphi_i \label{eq:dis} \\
    &\sum_{i \in \mathcal{I}} v_i = S,
\end{align}
\end{subequations}
\noindent
where $\alpha_i$ represents the bargaining power of operator $i$. The objective is to maximize the Nash Product, defined as the product of the surplus utilities over the disagreement point.

For our specific problem, we map the general parameters $\varphi$ and $S$ to the outcomes of the two-stage \gls{acr:ndg}.\\
\emph{Disagreement Point ($\varphi$):}
We define the disagreement point as the payoffs of the fully non-cooperative scenario. If the payoff-sharing mechanism fails to yield an agreement, operators revert to non-cooperative behavior, utilizing their full budgets individually rather than participating in the two-stage process. 
Thus, $\varphi_i$ corresponds to the objective value achieved by operator $i$ in the fully non-cooperative \gls{acr:ndg}. 
Formally, $\varphi_i$ is equal to the objective value achieved by operator $i$ in the equilibrium state $h^{\mathrm{NE}}$ of the game where every operator solves the problem defined in \cref{problem:noncoop} using their full budget $\budgeti$ (i.e., setting $\beta_i=0$).
\begin{equation}
    \varphi_i = f_i\bigl(h^{\mathrm{NE}}, \demandmodel(h^{\mathrm{NE}}, \graph)\bigr), \quad \forall i \in \mathcal{I}. \nonumber
\end{equation}
This help to ensure individual rationality: for any agreement to be acceptable, the final payoff must satisfy $v_i \geq \varphi_i$.\\
\emph{Shareable Payoff ($S$):}
The total value available for sharing is defined as the surplus generated by the co-investment in the second stage. Specifically, the operators retain the benefits generated from Stage 1, and the "shareable" portion is the incremental gain realized in Stage 2.
To calculate this, we first define the payoffs for operator $i$ in each stage using their objective function $f_i$ in \cref{eq:f}.
Let $F_i^{\mathrm{S1}}$ denote the payoff for operator $i$ resulting from the Stage 1, and $F_i^{\mathrm{S2}}$ denote the payoff from the Stage 2 :
\begin{align}
F_i^\mathrm{S2} &= \sum_{i \in \mathcal{I}}f_i(h^\mathrm{S_2} ,\demandmodel(h^\mathrm{S_2}, \graph_\mathrm{S_1})),  \nonumber\\
F^{\mathrm{S1}}_i &= \sum_{i \in \mathcal{I}}f_i(h^\mathrm{S_1} ,\demandmodel(h^\mathrm{S_1}, \graph)).\nonumber
\end{align}
Let $Q_i$ represent the contribution of operator $i$ to this surplus pool, calculated as the difference between the Stage 2 payoff ($F^{S2}_i$) and the Stage 1 baseline ($F^{S1}_i$), adjusted for implementation costs ($b_i$).
\begin{align}
Q_i = F_i^{\mathrm{S2}} - \tup{F_i^{\mathrm{S1}} + b_i(h^{\mathrm{S1}}_i)}. \nonumber
\end{align}
The total shareable payoff is the sum of these contributions ($S = \sum Q_i$).

Substituting these specification into the problem (\ref{opt:payoffsharing_standard}), we formulate the payoff allocation problem. Let $q \in \mathcal{Q} \subseteq \mathbb{R}^{|\mathcal{I}|}$ denote the allocation decisions. 
The final payoff for operator $i$ is the sum of their Stage 1 and their allocated share, i.e., $v_i = F^{\mathrm{S1}}_i + q_i$. The optimization problem is:
\begin{subequations} 
\label{problem:coop_payoffshare}
\begin{align}
     \max_{q_i \in \mathcal{Q}_i} \quad &  \prod_{i \in \mathcal{I}} \tup{v_i - \varphi_i}^{\alpha_i} \label{ps:obj}\\
    \text{s.t.} \quad
     & \sum_{i \in \mathcal{I}} q_i = \sum_{i \in \mathcal{I}}  Q_i \label{eq:qQ} \\
     & Q_i = F_i^{\mathrm{S_2}} - \tup{F_i^{\mathrm{S_1}} + b_i\tup{h^
     \mathrm{S_1}_i}} \label{eq:Q}\\
     & v_i = q_i + F^{\mathrm{S1}}_i,  \quad \forall i \in \mathcal{I} \label{eq:vq} \\
     &  v_i \geq \varphi_i, \quad \forall i \in \mathcal{I}, \label{eq:fagree} 
\end{align}
\end{subequations}
\noindent
where constraints~(\ref{eq:qQ}) ensure that the total shared payoff allocated across all operators is equal to the collectively agreed shareable value. 
$\alpha_i$ represents the bargaining power of operator $i$. 
For symmetric bargaining power,~$\alpha_i=1$. 

In modeling the payoff-sharing process, it is essential to account for how real-world negotiations typically unfold.
To this end, we incorporate two practical considerations that can significantly influence the eventual allocation of cooperative gains: \emph{bargaining power} and \emph{selective sharing} behavior.

\paragraph{Discussion on Bargaining Power and Exploration} 
\label{sec:Discussion on Bargaining Power and Exploration}
First, the payoff allocation can be shaped by the relative bargaining strength of each operator.
In practice, an operator's bargaining position is often tied to its level of financial commitment: those who contribute a greater proportion of the total co-investment generally wield greater influence in negotiations and, consequently, command a proportionally larger share of the cooperative benefits.
To formalize this relationship, define the bargaining power~$\alpha_i$ of operator~$i$ as:
\begin{equation}
     \alpha_i = \frac{\beta_i B_i}{\sum_{j \in \mathcal{I}} \beta_j B_j} , \quad \forall i \in \mathcal{I}. \label{eq:alpha}
\end{equation}
This formulation ensures that the influence of each operator in determining the final payoff allocation is directly proportional to its contribution to the joint investment pool.

Second, in many real-world contexts, primary investors may be willing to share only the surplus value that is generated beyond their own operational network, keeping the internally generated benefits for themselves.
To capture such selective sharing, we introduce a binary parameter~$\epsilon_i \in \{0,1\}$ representing operator~$i$'s willingness to share the cooperative surplus.
This parameter modifies both the total shareable payoff~$q_i$ and the operator's resulting benefit~$v_i$, as follows:
\begin{subequations}
    \begin{align}
    &\sum_{i \in \mathcal{I}} q_i = \sum_{i \in \mathcal{I}} \epsilon_i Q_i, \label{eq:epsilon_qQ} \\
     &v_i = q_i + F^{S1}_i + \tup{1 - \epsilon_i} Q_i, \quad \forall i \in \mathcal{I}, \label{eq:epsilon_vq} 
    \end{align}
\end{subequations}
Here,~$\epsilon_i = 1$ denotes full willingness to share the surplus generated within one's own network, whereas~$\epsilon_i = 0$ indicates that the operator retains this portion exclusively.
Both bargaining weights and selective sharing can be incorporated into the payoff-sharing framework by modifying \eqref{ps:obj}, \eqref{eq:qQ}, and \eqref{eq:vq}.
Their implications are examined in the case study in \Cref{sec:Zurich-Winterthur Network}.
Finally, to facilitate interpretation of the results, we introduce two conceptual benchmarks.\\
\emph{Minimum Guaranteed Return (MGR)}: 
The smallest guaranteed payoff increases once an operator's co-investment ratio exceeds $\beta_\mathrm{MGR}$:
\[
\mathrm{MGR}_i = \min_{\beta_i \geq \beta_{\mathrm{MGR}}} \frac{v_i(\beta_i) - \varphi_i}{\varphi_i}.
\]
It represents the return security under the cooperative arrangement.\\
\emph{Strategic Exploitation Threshold (SET)}: The co-investment ratio beyond which an operator's marginal gains decline due to others' strategic actions (e.g., withholding, benefit reallocation):
\[
\frac{\partial v_i}{\partial \beta_i} < 0, \quad \forall \, \beta_i > \mathrm{SET},
\]
where extra co-investment no longer yields more returns.

\begin{remark} [Interdependence of design stages]
    We model network design as a two-stage process: a non-cooperative \gls{acr:ndg} followed by co-investment and payoff sharing.
    This paper focuses on the mechanism and equilibria, while future research should examine the dynamic optimality that emerges from repeated interaction between these stages.
\end{remark}

\section{Theoretical Analysis}\label{sec:theoretical}
In Sections \ref{sec:Game-theoretical Framework} and \ref{sec:coinvestment}, we established the frameworks for the non-cooperative and cooperative network design, respectively. 
To ensure the proposed framework with specific objective functions, demand models, budget constraints, and decision variables is both computationally tractable and theoretically sound, we now analyze its fundamental properties. Specifically, we establish conditions under which the individual operator's strategic optimization problem is solvable (Lemma~\ref{lemma:MICP}), demonstrate that a stable equilibrium exists for the relaxed non-cooperative game (Proposition~\ref{prop:existence}), and prove that the cooperative payoff-sharing mechanism yields a unique, valid solution (Proposition~\ref{prop:payoff}).

\subsection{Computational Tractability}
We first study the properties of the strategic local optimization problem~$\mathsf{Loc}_i$ defined in \cref{problem:noncoop}.
Its complexity depends on the objective function and the demand model.
To assess solvability, we begin by defining the continuous relaxation of the game.
\begin{definition}[Continuous Relaxation of the \gls{acr:ndg}]
The continuous relaxation of the non-cooperative game $\networkgame$, denoted by $\networkgame_{\mathrm{cont}}$, is obtained by replacing the discrete strategy space $\mathcal{H}_i$ with its convex hull $\bar{\mathcal{H}}_i$. Specifically, the binary decision variables $d_e \in \{0,1\}$ are relaxed to $\bar{d_e} \in [0,1]$.
\end{definition}
\begin{lemma}[Convexity]
\label{lemma:MICP}
The continuous relaxation of the strategic local optimization problem (Eq. 12) is a convex optimization program, and hence the original problem in \cref{problem:noncoop} is a mixed-integer convex program (MICP), provided that the following conditions hold for all \gls{acr:pt} edges $e \in \edges^i_P$:
\begin{enumerate}
    \item \gls{acr:pt} utility is greater than or equal to the alternative:
    \begin{align}
        u_m^A - u_m^P \leq 0 \label{eq:condition_u}
    \end{align}
    \item The marginal gain from shifting users to \gls{acr:pt} is non-negative:
    \begin{align} 
    \Delta_e &= - l_e
    \left(
            \weighte \gamma_{2}^\mathrm{P} 
            + \weightc \left(\frac{\gamma_\mathrm{vot}}{v_P} + \gamma_{1}^\mathrm{P}\right)
            -
            \weightp \gamma_{1}^\mathrm{P}
    \right) \nonumber
    \\ 
    & + \sum_{a\in \edges^A_e} 
    l_a
    \left( 
            \weighte \gamma_{2}^\mathrm{A} 
            + \weightc \left(\frac{\gamma_\mathrm{vot}}{v_A}+\gamma_{1}^\mathrm{A}\right)
      \right) 
    \geq 0.  \label{eq:condition_e}
    \end{align}
\end{enumerate}
\end{lemma}

\begin{proof}
An optimization problem is a \gls{acr:micp} if its continuous relaxation is convex. We analyze the components of Problem (\ref{problem:noncoop}):

First, with the continuous relaxation of $d$, the decision variables lie in a compact and bounded domain, with $\bar{d}\in [0,1]^{|\edges^i_P|}$ and $s \in [0, s_{\max}]^{|\edges^i_P|}$.

Next, we examine the intermediate variables for route choice $p_m$ (in \cref{eq:p_m}) and edge flow $y$ (in \cref{eq:y}). 
Travel utility $u^P_m$ is affine in $d$, as it varies linearly and monotonically with $x$ and $d$, and the alternative-route utility $u^A_e$ is independent of $d$. Therefore, the utility difference $u^P_m - u^A_m$ is affine in $d$. 
Under this condition, the sigmoid function $p_m = 1/\tup{1 + e^{\eta}}$, with $\eta=u_m^A - u_m^P \leq 0$ (Condition (\ref{eq:condition_u})), is concave and monotonically increasing in $d$. Thus, the potential \gls{acr:pt} demand $\tilde{y}_e$ is concave.  

The objective function in \cref{eq:f} accounts for regional emissions, social welfare, and revenue.
It can be reformulated by grouping terms associated with \gls{acr:pt} flow and alternative-mode flow:
\begin{align}
    f_i & = \sum_{e \in \edges^i_P} K^P_e y_e + \sum_{a \in \edges^i_A} K^A_a y_a - \weightp b_i(h_i) \nonumber \label{eq:f_reformulate1}
\end{align}
where $b_i(h_i)$ represents linear construction and operation costs; $K^P_e$ and $K^A_a$ represent the marginal benefit coefficients for \gls{acr:pt} and alternative flows, respectively:
\begin{align}
    K^P_e &= - l_e \left( \weighte \gamma_{2}^\mathrm{P} + \weightc \left(\frac{\gamma_\mathrm{vot}}{v_P} + \gamma_{1}^\mathrm{P}\right) - \weightp \gamma_{1}^\mathrm{P} \right), \nonumber \\
    K^A_a &= - l_a \left( \weighte \gamma_{2}^\mathrm{A} + \weightc \left(\frac{\gamma_\mathrm{vot}}{v_A} + \gamma_{1}^\mathrm{A}\right) \right), \nonumber 
\end{align}
Note that in Condition (\ref{eq:condition_e}), $\Delta_e = K^P_e - \sum K^A_a \geq 0$.
To analyze the convexity, we further decompose the objective function into PT-edge contributions $g_e$:
\begin{align}
    f_i = \sum_{e \in \edges^i_P} g_e - \weightp b_i(h_i). \nonumber
\end{align}
The expression for $g_e$ depends on the relationship between the potential \gls{acr:pt} flow $\tilde{y}_e$ and the capacity $c_e$.
Let $\edges^A_e$ denote the set of alternative edges as a substitute of \gls{acr:pt} edge $e$. As defined before, let $\tilde{y}_e=\sum_{m \in \mathcal{M}} \mathds{1}_{e \in \edges^P_m} \alpha_m p_m$ denote the potential \gls{acr:pt} flow intending to use edge $e$.

\textit{Case 1 ($\tilde{y}_e \le c_e$)}: Edge $e$ accommodates all potential \gls{acr:pt} demand, and the remaining demand uses the alternative edges.
The contribution is:
\begin{align}
    g_e &= K^P_e \tilde{y}_e + \sum_{a\in \edges^A_e} K^A_a \left(\sum_{m \in \mathcal{M}} \mathds{1}_{e \in \edges^P_m} \alpha_m - \tilde{y}_e\right) \nonumber \\
    &= \underbrace{\sum_{a\in \edges^A_e} K^A_a \left(\sum_{m \in \mathcal{M}} \mathds{1}_{e \in \edges^P_m} \alpha_m\right)}_{\text{Constant}} + \underbrace{\left(K^P_e - \sum_{a\in \edges^A_e} K^A_a\right)}_{\Delta_e \geq 0} \tilde{y}_e. \nonumber
\end{align}
Since $\Delta_e \geq 0$ (Condition (\ref{eq:condition_e})) and $\tilde{y}_e$ is concave, the second term is concave.

\textit{Case 2 ($\tilde{y}_e > c_e$)}: The \gls{acr:pt} flow is capped at $c_e$, and the excess demand shifts to the alternative, then:
\begin{align}
    g_e &= K^P_e c_e + \sum_{a\in \edges^A_e} K^A_a \left(\left(\sum_{m \in \mathcal{M}} \mathds{1}_{e \in \edges^P_m} \alpha_m - \tilde{y}_e\right) +(\tilde{y}_e - c_e) \right) \nonumber \\
    &= K^P_e c_e + \sum_{a\in \edges^A_e} K^A_a \left( \sum_{m \in \mathcal{M}} \mathds{1}_{e \in \edges^P_m} \alpha_m - c_e \right). \nonumber
\end{align}
Here, the variable terms $\tilde{y}_e$ cancel out, and $g_e$ is linearly related to the capacity decision $c_e$.

Therefore, provided that for each $e \in \edges^P$, $\Delta_e$ is non-negative, the objective $f_i$ is a sum of concave, linear, and constant terms. By the composition rules for convexity, maximizing a concave objective over a linear domain constitutes a convex optimization problem. Therefore, the original problem is an \gls{acr:micp}.
\end{proof}
\begin{remark} [Illustration of convexity conditions]
This result implies that the strategic network design problem in a multi-agent environment remains convex if the service is designed to be beneficial for both users and operators:
\begin{itemize}
    \item For users, Condition \eqref{eq:condition_u} implies that the \gls{acr:pt} service is more attractive to users than the alternative, rendering the route choice function concave.
    \item For operators, Condition \eqref{eq:condition_e} requires that the net marginal benefit of accommodating a user on \gls{acr:pt} edge $e$, accounting for profit, emissions, and social welfare, is non-negative ($\Delta_e \geq 0$). This ensures that shifting demand to \gls{acr:pt} improves the objective, even when capacity constraints force excess demand back to alternative modes.
\end{itemize}
From a computational perspective, identifying the problem as an \gls{acr:micp} is significant because it enables global optimality guarantees via standard methods such as branch-and-bound or outer-approximation algorithms~\cite{Cauligi9304043,lee2012mixed}.
\end{remark}

\subsection{Existence of Equilibrium}
With the tractability of the individual operator's problem, we then address the system stability.
Since the existence of pure Nash Equilibria in discrete games is not guaranteed, we analyze the properties of the continuous relaxation.
To prove the existence of an equilibrium in $\networkgame_{\mathrm{cont}}$, we invoke two fundamental theorems from fixed-point theory.

\begin{theorem}[Kakutani’s Fixed Point theorem\cite{kakutani1941generalization}]
\label{thm:kaku}
    Let $ \Phi: X \to 2^X $ be a set-valued function on $X$. There exists $x^* \in \Phi\tup{x^*}$ if the following conditions hold:
\begin{enumerate}
    \item $ X $ is a nonempty, compact, and convex subset of a Euclidean space.
    \item For all $x \in X $, $\Phi\tup{x}$ is nonempty, convex, and compact.
    \item The graph $\{\tup{x, y} \in X \times X : y \in \Phi\tup{x}\} $, is closed.
\end{enumerate}
\end{theorem}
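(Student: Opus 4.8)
The plan is to derive Kakutani's theorem from Brouwer's fixed-point theorem, which I treat as the deep topological input. The guiding idea is that a correspondence with nonempty convex values and closed graph can be approximated arbitrarily well by \emph{continuous single-valued} selections defined on finer and finer simplicial subdivisions of $X$. Each such selection is an honest self-map of $X$ and hence has a Brouwer fixed point; the closed-graph and convex-value hypotheses then let me pass these approximate fixed points to an exact fixed point of $\Phi$.

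First I would write $X \subset \mathbb{R}^n$ and fix a sequence of simplicial subdivisions (triangulations) of the compact convex set $X$ whose mesh tends to zero. For the $k$-th subdivision, at each vertex $v$ I choose an arbitrary point $\phi_k(v) \in \Phi(v)$, which is possible because $\Phi(v)$ is nonempty. Extending $\phi_k$ affinely across each simplex produces a continuous map $f_k : X \to X$; crucially, its image stays inside $X$ because $X$ is convex and the images of the vertices already lie in $X$. Next I apply Brouwer's fixed-point theorem to each $f_k$ (legitimate since $X$ is nonempty, compact, and convex), obtaining $x_k$ with $f_k(x_k) = x_k$. Writing $x_k$ in barycentric coordinates relative to the simplex containing it, $x_k = \sum_{i=0}^{n} \lambda_i^k v_i^k$ with $\lambda_i^k \ge 0$ and $\sum_i \lambda_i^k = 1$, the affine construction gives $x_k = f_k(x_k) = \sum_{i=0}^{n} \lambda_i^k \phi_k(v_i^k)$, where each $\phi_k(v_i^k) \in \Phi(v_i^k)$.

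The limiting step uses all three hypotheses exactly once. By compactness of $X$ and of the standard simplex of barycentric weights, I pass to a subsequence along which $x_k \to x^*$, $\lambda_i^k \to \lambda_i$, and $\phi_k(v_i^k) \to y_i$ for each $i$. Since the mesh tends to zero, every vertex satisfies $v_i^k \to x^*$. Each pair $(v_i^k, \phi_k(v_i^k))$ lies in the graph of $\Phi$ and converges to $(x^*, y_i)$, so the closed-graph hypothesis yields $y_i \in \Phi(x^*)$. Taking limits in $x_k = \sum_i \lambda_i^k \phi_k(v_i^k)$ gives $x^* = \sum_i \lambda_i y_i$ with $\sum_i \lambda_i = 1$, a convex combination of points of $\Phi(x^*)$; convexity of $\Phi(x^*)$ then forces $x^* \in \Phi(x^*)$, which is the claim.

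The hard part will not be any single calculation but the reliance on Brouwer's fixed-point theorem, which carries the essential topological content and must be invoked as a black box. The secondary subtlety is bookkeeping: verifying that each $f_k$ is a genuine continuous self-map of $X$ (this is where convexity of $X$ is needed) and that the closed-graph condition can be applied cleanly along the chosen subsequence. It is worth emphasizing how the hypotheses partition the work — compactness supplies the convergent subsequences, the convex values supply the final convex-combination step, and the closed graph transfers membership from the approximations to the limit; dropping any one of them breaks the argument.
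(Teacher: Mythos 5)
The paper does not actually prove this statement: it is Kakutani's fixed-point theorem, imported verbatim with a citation to Kakutani (1941) and used as a black box in the later existence proof for pure Nash equilibria of the \gls{acr:ndg}. The only meaningful comparison is therefore with the classical proof, and your proposal is precisely that classical argument — Brouwer's theorem applied to piecewise-affine selections on finer and finer subdivisions, followed by a limit that uses compactness, convexity of the values, and the closed graph exactly where you say it does.

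There is, however, one step that fails as written: a general compact convex $X \subset \mathbb{R}^n$ (e.g., a closed disk) is not a polytope, so it admits no triangulation by finitely many affine simplices, and hence ``fix a sequence of simplicial subdivisions of $X$ with mesh tending to zero'' is not available; the affine extension $f_k$ cannot be defined on all of $X$ as you describe. The standard repair is to run your entire argument on a large closed simplex $\Delta \supseteq X$, where barycentric subdivisions do exist: let $r:\Delta \to X$ be the nearest-point projection, which is continuous (and well defined) because $X$ is nonempty, closed, and convex, and apply your construction to the correspondence $\Psi := \Phi \circ r$ on $\Delta$, which inherits nonempty, convex, compact values and a closed graph. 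Your limiting argument then produces $x^* \in \Psi(x^*) = \Phi(r(x^*)) \subseteq X$, so $x^*$ lies in $X$, whence $r(x^*) = x^*$ and $x^* \in \Phi(x^*)$. With this purely technical but necessary modification your proof is complete and coincides with the standard one; the division of labor you describe among compactness, convex values, and the closed graph is accurate, and your observation that compactness of the values is doing no independent work (it follows from the closed graph and compactness of $X$) is also correct.
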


\begin{theorem}
[Maximum Theorem\cite{berge1963topological}] 
\label{thm:max}
Let  $K \subset \mathbb{R}^n$  be compact, and let $ f : K \times Y \to \mathbb{R} $ be a map that is continuous on $ K \times Y $ and convex in $ K $ for each fixed $ y \in Y $. 
Then, for $ y \in Y $, $ \phi\tup{y} = \arg \max_{x \in K} f\tup{x, y} $ is upper-hemicontinuous, and $ \phi\tup{y} \subset K $ is compact and convex.
\end{theorem}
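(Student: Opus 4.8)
The plan is to establish the four asserted properties of the correspondence $\phi$ one at a time, exploiting the fact that here the feasible set is the \emph{constant} compact set $K$, independent of $y$. This is the decisive simplification relative to the general Berge Maximum Theorem: the lower-hemicontinuity half of the usual argument (continuity of the constraint correspondence) is vacuous, so all the work concentrates in the upper-hemicontinuity direction.

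First I would dispose of nonemptiness and compactness of each value $\phi(y)$. For fixed $y$, the map $x \mapsto f(x,y)$ is continuous on the compact set $K$, so by the Weierstrass extreme value theorem it attains its maximum; hence $\phi(y) \neq \emptyset$. Writing $v(y) := \max_{x\in K} f(x,y)$, the set $\phi(y) = \{x \in K : f(x,y) = v(y)\}$ is the intersection of $K$ with the preimage of the closed singleton $\{v(y)\}$ under the continuous map $f(\cdot,y)$, hence closed; as a closed subset of the compact set $K$, it is compact. For convexity I would use a level-set argument, reading the hypothesis that $f$ is ``convex in $K$'' as concavity of $f(\cdot,y)$ together with convexity of $K$, which is the regime in which the argmax is genuinely convex and is precisely what \cref{lemma:MICP} guarantees in our application. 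Given $x_1,x_2 \in \phi(y)$ and $\lambda\in[0,1]$, concavity yields $f(\lambda x_1 + (1-\lambda)x_2,\, y) \geq \lambda f(x_1,y) + (1-\lambda) f(x_2,y) = v(y)$, while optimality forces the reverse inequality, so $\lambda x_1 + (1-\lambda)x_2 \in \phi(y)$.

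The crux, and the step I expect to be the main obstacle, is upper hemicontinuity. Since $\phi$ takes values in the fixed compact set $K$, uhc is equivalent to having a closed graph, so it suffices to prove the latter. I would take a sequence $(y_k, x_k) \to (y,x)$ with $x_k \in \phi(y_k)$ and verify $x \in \phi(y)$: optimality of $x_k$ gives $f(x_k, y_k) \geq f(x', y_k)$ for every $x' \in K$, and passing to the limit using the joint continuity of $f$ (and closedness of $K$, which yields $x \in K$) delivers $f(x,y) \geq f(x',y)$ for all $x' \in K$, i.e. $x \in \phi(y)$.

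The delicate point is not the limit-passing itself, which is routine once stated sequentially, but the justification that it suffices: I must invoke the standard fact that a compact-valued correspondence into a compact range is upper-hemicontinuous if and only if its graph is closed. Making this rigorous requires that $K \times Y$ be metrizable so the sequential characterization applies, which holds in our use where $Y \subseteq \mathbb{R}^m$; absent compactness of the range, closed graph alone would not imply uhc, so it is essential that the argmax lives in the fixed compact $K$. Once that equivalence is in hand, the remaining continuity estimates are immediate, and the theorem follows by assembling the four parts.
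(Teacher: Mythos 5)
Your proof is correct, but it necessarily takes a different route from the paper, because the paper offers no proof of this statement at all: it is quoted as Berge's Maximum Theorem, cited to \citep{berge1963topological}, and used as a black box in the existence argument for pure Nash equilibria. What you supply instead is a direct, self-contained proof of precisely the special case the paper needs, where the feasible set is the fixed compact set $K$ rather than a $y$-dependent constraint correspondence. This is a real simplification: the general Berge theorem assumes the constraint correspondence is \emph{continuous} (both upper- and lower-hemicontinuous), and the lower-hemicontinuity half of its proof---where most of the difficulty lives---is vacuous here, exactly as you observe. Your individual steps are sound: Weierstrass for nonemptiness, closed-subset-of-compact for compactness, the concavity level-set argument for convexity, and the closed-graph-plus-compact-range equivalence for upper hemicontinuity, with the correct caveat that the sequential argument needs $Y$ metrizable (true in the paper's application, where $Y$ sits inside a Euclidean space). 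You also catch something the paper glosses over: as literally stated, the hypothesis ``convex in $K$'' does not yield a convex argmax (maximizers of a genuinely convex function over a compact convex set concentrate at extreme points and need not form a convex set), and the set $K$ must itself be convex; the statement is only true with $f\tup{\cdot,y}$ \emph{concave}, which is the reading consistent with \cref{lemma:MICP} and with the downstream application of Kakutani's theorem. In short, the paper's citation-only approach buys generality (varying feasible sets, continuity of the value function), while your argument buys a transparent elementary proof that also exposes and repairs the imprecision in the statement as printed.
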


\begin{proposition} [Existence of Pure NE] \label{prop:existence}
If \eqref{eq:condition_u} and \eqref{eq:condition_e} hold, the game $\networkgame_{\mathrm{cont}}$ possesses at least one Pure Strategy Nash Equilibrium.
\end{proposition}

\begin{proof}
According to Lemma \ref{lemma:MICP}, assuming condition~\eqref{eq:condition_u} and \eqref{eq:condition_e} hold, for each operator $i \in \mathcal{I}$, the optimization problem $\mathsf{Loc}_i$ is an \gls{acr:micp} in the decision variables $h_i \in \mathcal{H}_i$.
When the network design strategies are continuous, the feasible strategy space $\mathcal{H}_i$ is compact and convex, and the objective function $f_i\tup{h_i, y}$ is continuous in $h_i \in \mathcal{H}_i$.
Then, we define the best response correspondence for each operator $i \in \mathcal{I}$ as:
\[
\phi_{i}\tup{h_{-i}}:= \arg \max_{h_i} f_i\tup{h_i, y},
\]
which returns the set of optimal responses to the fixed strategies of other operators.
Let the joint strategy profile be denoted by $h=\{h_i\}_{i \in \mathcal{I}} \in \mathcal{H}$
and the set-valued function to be:
\[
\Phi\tup{h} := \tup{ \phi_{i}\tup{h_{-i}} }_{i \in \mathcal{I}},
\]
Based on the \cref{thm:max}, the best response correspondence $\phi_{i}\tup{h_{-i}}$ is upper hemicontinuous and has non-empty, compact, and convex values, since the objective function is continuous and the feasible set is compact.
By \cref{thm:kaku}, the set-valued function $\Phi\tup{x}$ has at least one fixed point $h^{\star} \in \mathcal{H}$, such that $h^{\star} \in \Phi\tup{h^{\star}}$.
This fixed point is a pure Nash equilibrium of the \gls{acr:ndg} in the continuous relaxation according to the \cref{def:ne}.
\qed
\end{proof} 

\begin{remark} [Equilibrium existence]
This result guarantees the existence of equilibrium investment profiles in the continuous limit. This provides a rigorous reference for cooperative network design, by ensuring that the disagreement point $\varphi$ for the subsequent cooperative stage, as well as the equilibrium investment profile $h^{\mathrm{S1}}$ and network $\graph^{\mathrm{S1}}$ at Stage 1, are well defined.
\end{remark}

\subsection{Feasibility of Payoff Sharing}
Finally, we analyze the cooperative stage. The validity of \gls{acr:nbs} depends on the existence of a feasible agreement space.

\begin{proposition}[Existence and Uniqueness of Cooperative Solution]
\label{prop:payoff}
The payoff-sharing optimization problem (Eq. \ref{problem:coop_payoffshare}) yields a unique optimal allocation vector $v^*$ if and only if the total cooperative surplus is strictly positive:
\begin{align}
    \sum_{i \in \mathcal{I}} F_i^\mathrm{S2} - \sum_{i \in \mathcal{I}} b_i\tup{h_i^{\mathrm{S1}}} >\sum_{i \in \mathcal{I}} \varphi_i \label{eq:corollary1},
\end{align}
\end{proposition}
\noindent
This inequality shows that the payoff mechanism applies only when co-investment yields a higher total payoff than in the case of complete non-cooperation.

\begin{proof}
Condition~(\ref{eq:corollary1}) ensures that the set of feasible payoffs ($\mathcal{Q}$) is a non-empty, convex, and compact subset of $\mathbb{R}^{|\mathcal{I}|}$. 
Maximizing the Nash Product is equivalent to maximizing its logarithm, $\sum \alpha_i \ln(v_i - \varphi_i)$, which is a strictly concave function.
Since maximizing a strictly concave function over a convex compact set guarantees a unique global maximum, the solution $v^*$ exists and is unique.
\qed
\end{proof}

\section{Case Study} \label{sec:exp}
To demonstrate the effectiveness of the proposed framework, we conduct two sets of numerical experiments.
The first is based on the well-known Sioux Falls network in the United States (\cref{fig:sioux}), which serves as a benchmark for testing transportation network models~\cite{leblanc1975efficient}. 
The second applies the framework to real-world public transport networks in the Swiss cities of Zurich and Winterthur (\cref{fig:zuriwinti}).
In both studies, we model the decision-making of public transport operators with private car travel considered as the alternative mode for users.
To account for long-term planning under changing demand conditions, we assume that annual travel demand grows by a factor of~$\tau$.

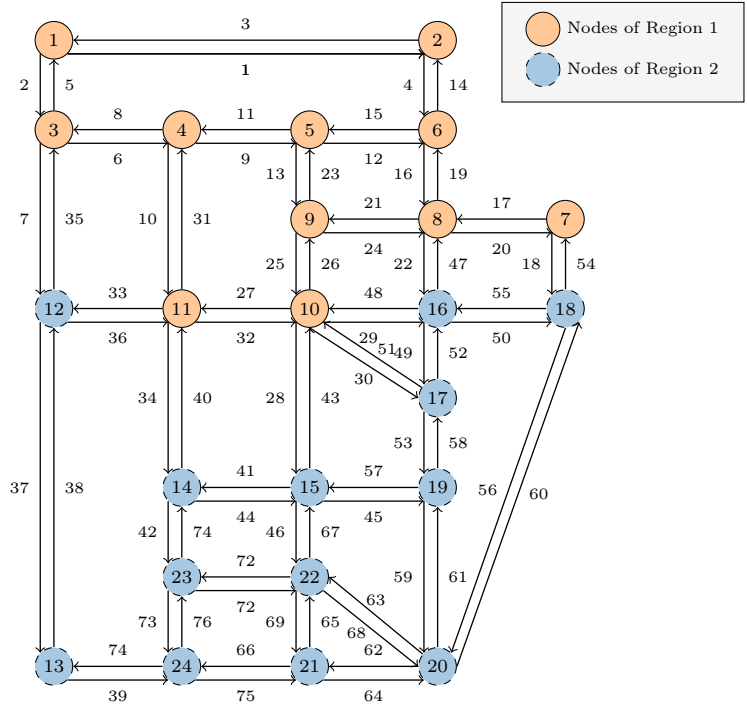
\begin{figure}[tb]
    \centering
    \begin{tikzpicture}[scale=1.7]
    \definecolor{steelblueline}{RGB}{167,200,227}
    \tikzstyle{node_style1}=[circle, draw, fill={rgb,255:red,255;green,200;blue,150}, minimum size=5mm, inner sep=0pt, font=\scriptsize]
    \tikzstyle{node_style2}=[circle, draw, dashed, fill=steelblueline, minimum size=5mm, inner sep=0pt, font=\scriptsize]
    \def \radius {10cm}
    \node[node_style2] (13) at (0, 0) {13};
    \node[node_style2] (24) at (1, 0) {24};
    \node[node_style2] (21) at (2, 0) {21};
    \node[node_style2] (20) at (3, 0) {20};
    \node[node_style2] (23) at (1, 0.7) {23};
    \node[node_style2] (22) at (2, 0.7) {22};
    \node[node_style2] (14) at (1, 1.4) {14};
    \node[node_style2] (15) at (2, 1.4) {15};
    \node[node_style2] (19) at (3, 1.4) {19};
    \node[node_style2] (17) at (3, 2.1) {17};
    \node[node_style2] (12) at (0, 2.8) {12};
    \node[node_style1] (11) at (1, 2.8) {11};
    \node[node_style1] (10) at (2, 2.8) {10};
    \node[node_style2] (16) at (3, 2.8) {16};
    \node[node_style2] (18) at (4, 2.8) {18};
    \node[node_style1] (7) at (4, 3.5) {7};
    \node[node_style1] (8) at (3, 3.5) {8};
    \node[node_style1] (9) at (2, 3.5) {9};
    \node[node_style1] (6) at (3, 4.2) {6};
    \node[node_style1] (5) at (2, 4.2) {5};
    \node[node_style1] (4) at (1, 4.2) {4};
    \node[node_style1] (3) at (0, 4.2) {3};
    \node[node_style1] (2) at (3, 4.9) {2};
    \node[node_style1] (1) at (0, 4.9) {1};

    \tikzstyle{line_style}=[->,  line width=0.5pt, font=\fontsize{6}{8}\selectfont]
    \draw[line_style] (2) -- (1) node[midway, above]{3};
    \draw[line_style] (1.south east) -- (2.south west) node[midway, below] {1};
    \draw[line_style] (4) -- (3) node[midway, above]{8};
    \draw[line_style] (5) -- (4) node[midway, above]{11};
    \draw[line_style] (6) -- (5) node[midway, above]{15};
    \draw[line_style] (7) -- (8) node[midway, above]{17};
    \draw[line_style] (8) -- (9) node[midway, above]{21};
    \draw[line_style] (18) -- (16) node[midway, above]{55};
    \draw[line_style] (16) -- (10) node[midway, above]{48};
    \draw[line_style] (10) -- (11) node[midway, above]{27};
    \draw[line_style] (11) -- (12) node[midway, above]{33};
    \draw[line_style] (19) -- (15) node[midway, above]{57};
    \draw[line_style] (15) -- (14) node[midway, above]{41};
    \draw[line_style] (22) -- (23) node[midway, above]{72};
    \draw[line_style] (20) -- (21) node[midway, above]{62};
    \draw[line_style] (21) -- (24) node[midway, above]{66};
    \draw[line_style] (24) -- (13) node[midway, above]{74};
    \draw[line_style] (1.south west) -- (3.north west) node[midway, left]{2};
    \draw[line_style] (12.south west) -- (13.north west) node[midway, left]{37};
    \draw[line_style] (4.south west) -- (11.north west) node[midway, left]{10};
    \draw[line_style] (11.south west) -- (14.north west) node[midway, left]{34};
    \draw[line_style] (14.south west) -- (23.north west) node[midway, left]{42};
    \draw[line_style] (23.south west) -- (24.north west) node[midway, left]{73};
    \draw[line_style] (5.south west) -- (9.north west) node[midway, left]{13};
    \draw[line_style] (9.south west) -- (10.north west) node[midway, left]{25};
    \draw[line_style] (10.south west) -- (15.north west) node[midway, left]{28};
    \draw[line_style] (15.south west) -- (22.north west) node[midway, left]{46};
    \draw[line_style] (22.south west) -- (21.north west) node[midway, left]{69};
    \draw[line_style] (2.south west) -- (6.north west) node[midway, left]{4};
    \draw[line_style] (6.south west) -- (8.north west) node[midway, left]{16};
    \draw[line_style] (8.south west) -- (16.north west) node[midway, left]{22};
    \draw[line_style] (16.south west) -- (17.north west) node[midway, left]{49};
    \draw[line_style] (17.south west) -- (19.north west) node[midway, left]{53};
    \draw[line_style] (19.south west) -- (20.north west) node[midway, left]{59};
    \draw[line_style] (3.south west) -- (12.north west) node[midway, left]{7};

    \draw[line_style] (1.south east) -- (2.south west) node[midway, below] {1};
    \draw[line_style] (3.south east) -- (4.south west) node[midway, below]{6};
    \draw[line_style] (4.south east) -- (5.south west) node[midway, below]{9};

    \draw[line_style] (5.south east) -- (6.south west) node[midway, below]{12};
    \draw[line_style] (8.south east) -- (7.south west) node[midway, below]{20};
    \draw[line_style] (9.south east) -- (8.south west) node[midway, below]{24};
    \draw[line_style] (16.south east) -- (18.south west) node[midway, below]{50};
    \draw[line_style] (10.south east) -- (16.south west) node[pos=0.45, below]{29};
    \draw[line_style] (11.south east) -- (10.south west) node[midway, below]{32};
    \draw[line_style] (12.south east) -- (11.south west) node[midway, below]{36};
    \draw[line_style] (15.south east) -- (19.south west) node[midway, below]{45};
    \draw[line_style] (14.south east) -- (15.south west) node[midway, below]{44};
    \draw[line_style] (23.south east) -- (22.south west) node[midway, below]{72};
    \draw[line_style] (21.south east) -- (20.south west) node[midway, below]{64};
    \draw[line_style] (24.south east) -- (21.south west) node[midway, below]{75};
    \draw[line_style] (23) -- (14) node[midway, right]{74};
    \draw[line_style] (3) -- (1) node[midway, right]{5};
    \draw[line_style] (12) -- (3) node[midway, right]{35};
    \draw[line_style] (13) -- (12) node[midway, right]{38};
    \draw[line_style] (11) -- (4) node[midway, right]{31};
    \draw[line_style] (14) -- (11) node[midway, right]{40};
    \draw[line_style] (13.south east) -- (24.south west) node[midway, below]{39};
    \draw[line_style] (24) -- (23) node[midway, right]{76};
    \draw[line_style] (9) -- (5) node[midway, right]{23};
    \draw[line_style] (10) -- (9) node[midway, right]{26};
    \draw[line_style] (15) -- (10) node[midway, right]{43};
    \draw[line_style] (22) -- (15) node[midway, right]{67};
    \draw[line_style] (21) -- (22) node[midway, right]{65};
    \draw[line_style] (6) -- (2) node[midway, right]{14};
    \draw[line_style] (8) -- (6) node[midway, right]{19};
    \draw[line_style] (16) -- (8) node[midway, right]{47};;
    \draw[line_style] (17) -- (16) node[midway, right]{52};
    \draw[line_style] (19) -- (17) node[midway, right]{58};
    \draw[line_style] (20) -- (19) node[midway, right]{61};
    \draw[line_style] (18.south) -- (20.north east) node[midway, left]{56};
    \draw[line_style] (20.east) -- (18.south east) node[midway, right]{60};
    \draw[line_style] (7.south west) -- (18.north west) node[midway, left]{18};
    \draw[line_style] (18) -- (7) node[midway, right]{54};
    \draw[line_style] (20) -- (22.east) node[midway,above]{63};
    \draw[line_style] (22.south east) -- (20.west) node[pos=0.35, below ]{68};
    \draw[line_style] (17) -- (10.south east)  node[pos=0.35, above]{51};
    \draw[line_style] (10.south) -- (17.west) node[midway, below]{30};
    \node[scale=0.9,draw, fill={gray!8!white}, anchor=north east,  font=\scriptsize, text width=3.3cm, text centered] at (5.4, 5.2) {
        \begin{tabular}{l}
            \begin{tikzpicture}
                \node[node_style1] (legendA) {};
                \node[draw=none, minimum size=3mm,fill=none, anchor=west] at (legendA.east) { Nodes of Region 1};
            \end{tikzpicture} \\
            \begin{tikzpicture}
                \node[node_style2] (legendB) {};
                \node[draw=none, minimum size=3mm,fill=none, anchor=west] at (legendB.east) { Nodes of Region 2};
            \end{tikzpicture}
        \end{tabular}
        };
\end{tikzpicture}

    \caption{Sioux Falls network, subdivided between Region 1 and 2.}
    \label{fig:sioux}
\end{figure}
\begin{figure}[tb]
    \centering
    \includegraphics[width=0.5\linewidth]{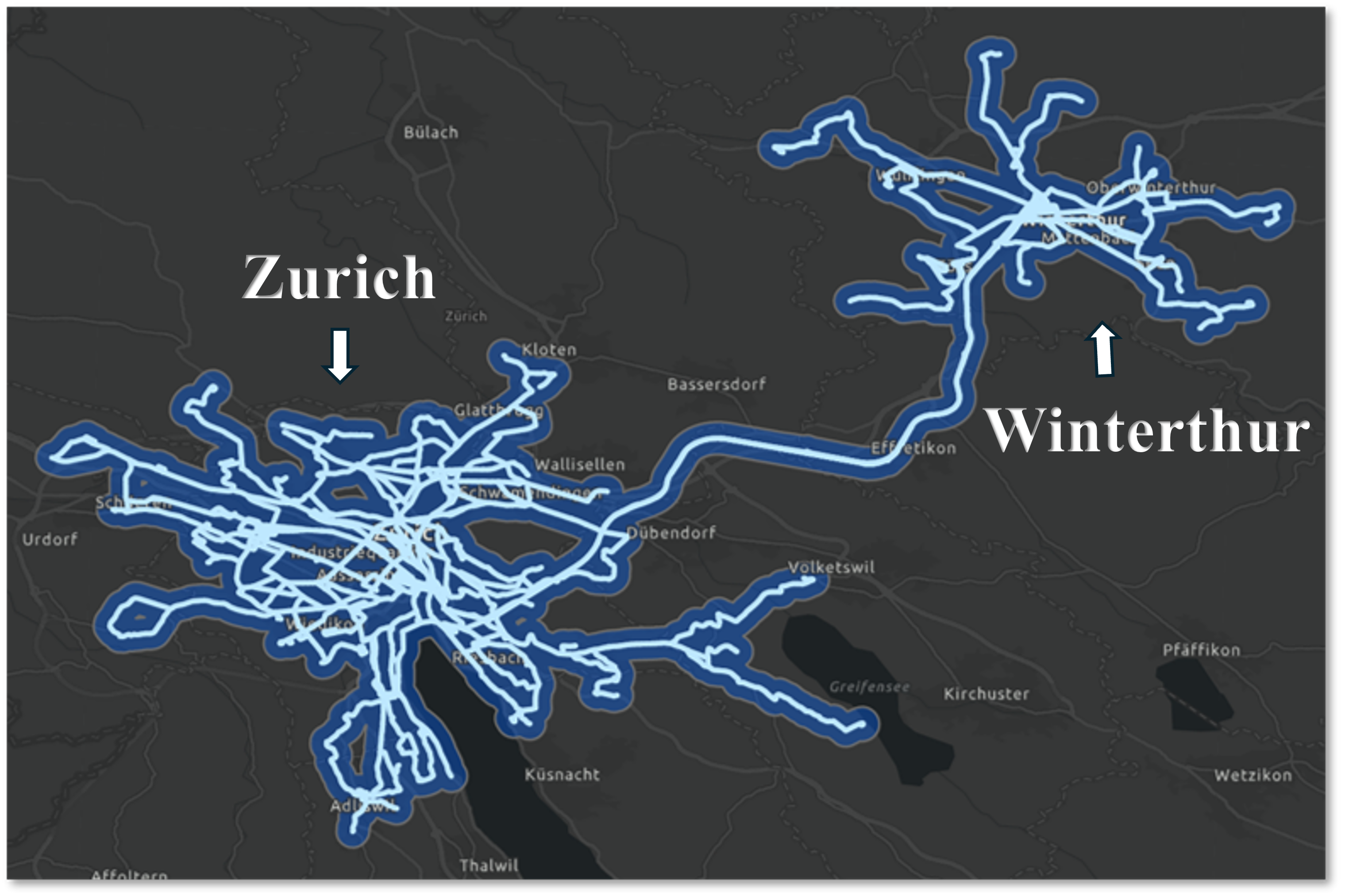}
    \caption{Public transport networks in Zurich and Winterthur.}
    \label{fig:zuriwinti}
\end{figure}
\begin{figure*}[h!]
    \centering
    \includegraphics[width=0.8\linewidth]{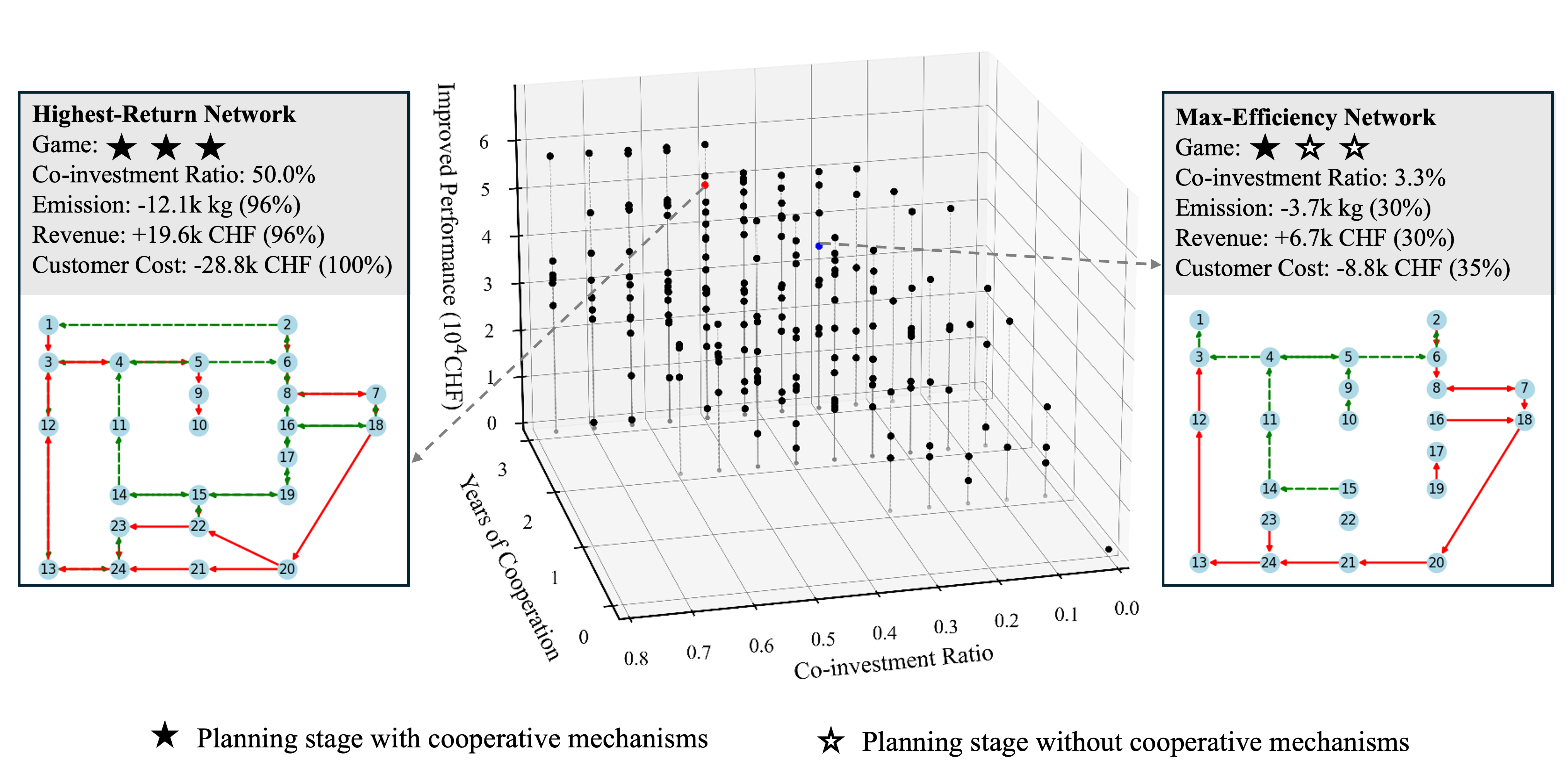}
    \caption{Equilibrium solutions of interactive network design.}
    \label{fig:overview}
\end{figure*}
\begin{figure}[tb]
    \centering
    \input{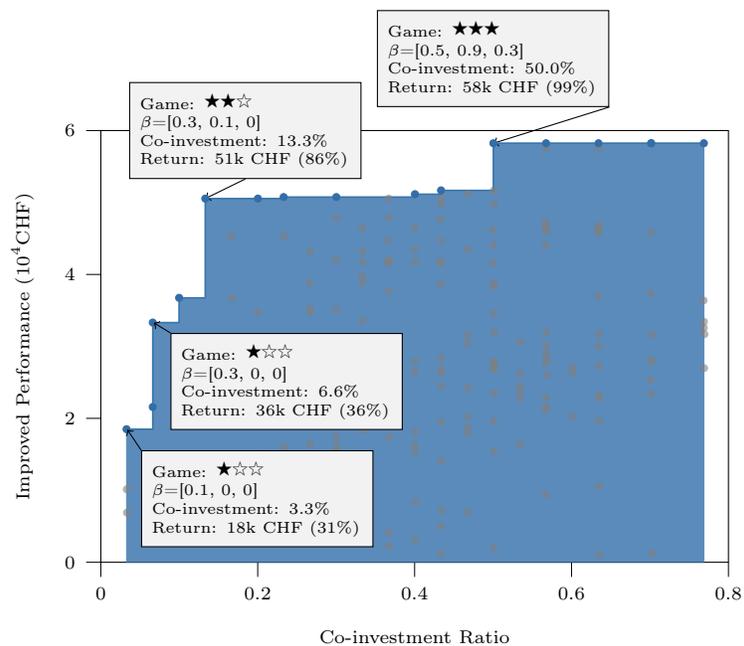}
    \caption{Co-investment ratio and improved performance.}
    \label{fig:co_ratio}
\end{figure}
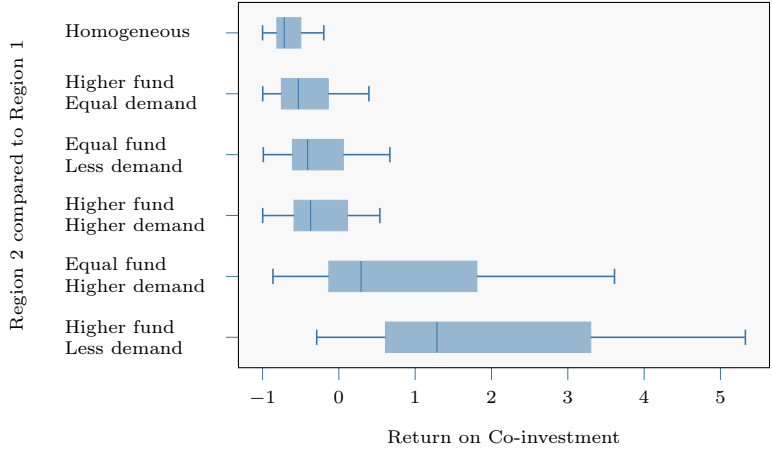
\begin{figure}[tb]
    \centering
\begin{tikzpicture}[scale=1.1]

\definecolor{darkgray176}{RGB}{176,176,176}
\definecolor{lightblue}{RGB}{173,216,230}
\definecolor{steelblueline}{RGB}{55,118,171}
\definecolor{steelblueface}{RGB}{55,118,171}
\begin{axis}[
    width=8cm,  
    height=6cm, 
    tick align=outside,
    tick pos=left,
    x grid style={darkgray176},
    xlabel={Return on Co-investment},
    xmin=-1.3161825, xmax=5.64297535714286,
    xtick style={color=steelblueline},
    y grid style={darkgray176},
    ylabel={Region 2 compared to Region 1},
    ymin=0.5, ymax=6.5,
    ytick style={color=steelblueline},
    ytick={1,2,3,4,5,6},
    yticklabels={
      {\parbox{2cm}{Higher fund \\ Less demand}},
      {\parbox{2cm}{Equal fund \\ Higher demand}},
      {\parbox{2cm}{Higher fund \\ Higher demand}},
      {\parbox{2cm}{Equal fund\\ Less demand}},
      {\parbox{2cm}{Higher fund\\Equal demand}},
      \parbox{2cm}{Homogeneous}
    },
    font=\scriptsize,
    axis background/.style={fill=lightgray,opacity=0.1},
    ]
\path [draw=steelblueline, fill=steelblueline,opacity=0.5]
(axis cs:0.61103373015873,0.75)
--(axis cs:0.61103373015873,1.25)
--(axis cs:3.30059583333333,1.25)
--(axis cs:3.30059583333333,0.75)
--(axis cs:0.61103373015873,0.75)
--cycle;
\addplot [semithick,steelblueline]
    table {%
    0.61103373015873 1
    -0.290611111111111 1
    };
\addplot [semithick, steelblueline]
    table {%
    3.30059583333333 1
    5.32665 1
    };
\addplot [semithick, steelblueline]
    table {%
    -0.290611111111111 0.875
    -0.290611111111111 1.125
    };
\addplot [semithick,steelblueline]
    table {%
    5.32665 0.875
    5.32665 1.125
    };
\path [draw=steelblueline, fill=steelblueline,,opacity=0.5]
    (axis cs:-0.131619642857143,1.75)
    --(axis cs:-0.131619642857143,2.25)
    --(axis cs:1.80814583333333,2.25)
    --(axis cs:1.80814583333333,1.75)
    --(axis cs:-0.131619642857143,1.75)
    --cycle;
\addplot [semithick, steelblueline]
    table {%
    -0.131619642857143 2
    -0.863633333333333 2
    };
\addplot [semithick, steelblueline]
    table {%
    1.80814583333333 2
    3.61175 2
    };
\addplot [semithick, steelblueline]
table {%
-0.863633333333333 1.875
-0.863633333333333 2.125
};
\addplot [semithick, steelblueline]
table {%
3.61175 1.875
3.61175 2.125
};
\path [draw=steelblueline, fill=steelblueline,opacity=0.5]
(axis cs:-0.587049206349206,2.75)
--(axis cs:-0.587049206349206,3.25)
--(axis cs:0.113916666666667,3.25)
--(axis cs:0.113916666666667,2.75)
--(axis cs:-0.587049206349206,2.75)
--cycle;
\addplot [semithick, steelblueline]
table {%
-0.587049206349206 3
-0.99784 3
};
\addplot [semithick, steelblueline]
table {%
0.113916666666667 3
0.536733333333333 3
};
\addplot [semithick, steelblueline]
table {%
-0.99784 2.875
-0.99784 3.125
};
\addplot [semithick, steelblueline]
table {%
0.536733333333333 2.875
0.536733333333333 3.125
};
\path [draw=steelblueline, fill=steelblueline,opacity=0.5]
(axis cs:-0.608140476190476,3.75)
--(axis cs:-0.608140476190476,4.25)
--(axis cs:0.058425,4.25)
--(axis cs:0.058425,3.75)
--(axis cs:-0.608140476190476,3.75)
--cycle;
\addplot [semithick, steelblueline]
table {%
-0.608140476190476 4
-0.9898 4
};
\addplot [semithick, steelblueline]
table {%
0.058425 4
0.6684 4
};
\addplot [semithick, steelblueline]
table {%
-0.9898 3.875
-0.9898 4.125
};
\addplot [semithick, steelblueline]
table {%
0.6684 3.875
0.6684 4.125
};
\path [draw=steelblueline, fill=steelblueline,opacity=0.5]
(axis cs:-0.7529375,4.75)
--(axis cs:-0.7529375,5.25)
--(axis cs:-0.13871,5.25)
--(axis cs:-0.13871,4.75)
--(axis cs:-0.7529375,4.75)
--cycle;
\addplot [semithick, steelblueline]
table {%
-0.7529375 5
-0.997857142857143 5
};
\addplot [semithick, steelblueline]
table {%
-0.13871 5
0.393316666666667 5
};
\addplot [semithick, steelblueline]
table {%
-0.997857142857143 4.875
-0.997857142857143 5.125
};
\addplot [semithick, steelblueline]
table {%
0.393316666666667 4.875
0.393316666666667 5.125
};
\path [draw=steelblueline, fill=steelblueline,opacity=0.5]
(axis cs:-0.812538888888889,5.75)
--(axis cs:-0.812538888888889,6.25)
--(axis cs:-0.498966666666667,6.25)
--(axis cs:-0.498966666666667,5.75)
--(axis cs:-0.812538888888889,5.75)
--cycle;
\addplot [semithick, steelblueline]
table {%
-0.812538888888889 6
-0.999857142857143 6
};
\addplot [semithick, steelblueline]
table {%
-0.498966666666667 6
-0.19705 6
};
\addplot [semithick, steelblueline]
table {%
-0.999857142857143 5.875
-0.999857142857143 6.125
};
\addplot [semithick, steelblueline]
table {%
-0.19705 5.875
-0.19705 6.125
};
\addplot [steelblueline]
table {%
1.285245 0.75
1.285245 1.25
};
\addplot [steelblueline]
table {%
0.291144285714286 1.75
0.291144285714286 2.25
};
\addplot [steelblueline]
table {%
-0.37039 2.75
-0.37039 3.25
};
\addplot [steelblueline]
table {%
-0.409974444444444 3.75
-0.409974444444444 4.25
};
\addplot [steelblueline]
table {%
-0.530703571428571 4.75
-0.530703571428571 5.25
};
\addplot [steelblueline]
table {%
-0.71633 5.75
-0.71633 6.25
};
\end{axis}

\end{tikzpicture}
    \caption{System improvement for heterogeneous regions.}
    \label{fig:profitAB}
\end{figure}
\begin{figure}[tb]
    \centering
    \definecolor{steelblue}{RGB}{55,118,171}
\begin{tikzpicture}[scale=1.1]
\begin{polaraxis}[
    xmin=90,
    xmax=450,
    width=6cm,
    height=6cm,
    xtick={90,162,234,306,378},
    xticklabels={
        Overall Improvement (+),
        Emission Reduction (-),
        Co-investment Ratio,
        Revenue (+),
        Customer Cost (-),
    },
    xticklabel style={
        font=\small,
        text width=3.7cm,
        align=center,
        inner sep=0pt,
        yshift=4pt 
    },
    ytick={0.2,0.4,0.6,0.8,1.0},
    yticklabels={0.2, 0.4, 0.6, 0.8, 1.0},
    yticklabel style={
    font=\small,}    ,
    every tick/.style={black},
    axis line style={black},
    cycle list name=color list,
    legend style={
    at={(0.5,1.25)},
    legend columns=-1,
    anchor=south,
    font=\small}
]

\addplot+[mark=*, mark options={fill=steelblue}, thick, steelblue, fill=steelblue!30, fill opacity=0.3]
coordinates {
    (90,0.57439)
    (162,0.40805)
    (234,0.75)
    (306,0.1834)
    (18,0.203219)
    (90,0.57439)
};
\node at (axis cs:90,0.57439)   
[font=\small, 
    fill=gray!10,
    rounded corners=1pt,
    anchor=south west] {0.57};

\node at (axis cs:234,0.75)   
[font=\small, 
    fill=gray!10,
    rounded corners=1pt,
    anchor=north west] {0.75};
\addplot+[mark=square*, mark options={fill=orange}, thick, orange, fill=orange!30, fill opacity=0.3]
coordinates {
    (90,0.096000)
    (162,0.87876)
    (234,0.0833)
    (306,0.23154)
    (18,0.03396)
    (90,0.096000)
};
\node at (axis cs:152,0.65)   
[font=\small, 
    fill=gray!10,
    rounded corners=1pt,
    anchor=south east] {0.87};

\legend{Highest-Return Solution, Maximum-Efficiency Solution}

\end{polaraxis}
\end{tikzpicture}
    \caption{Two representative outcomes for Zurich and Winterthur.}
    \label{fig:zurciwinti_two}
\end{figure}
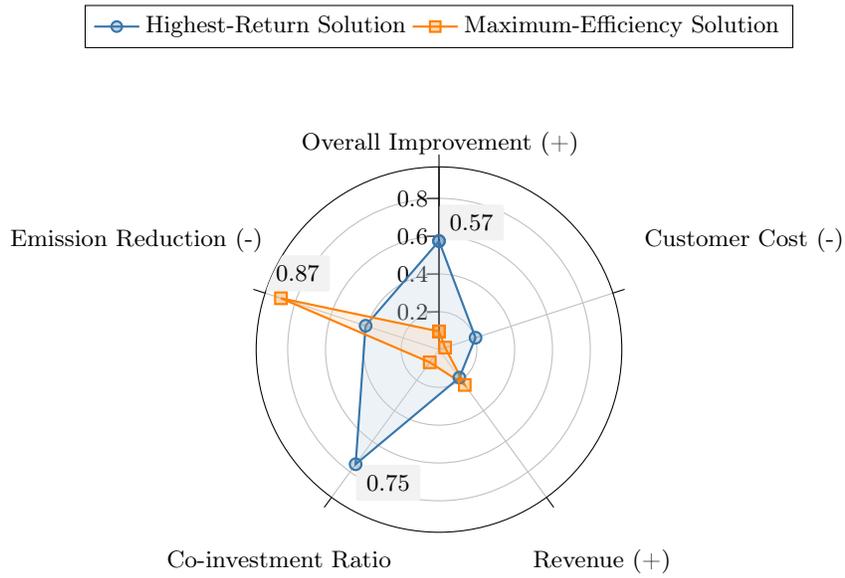
\begin{figure*}[t]
    \centering
    \begin{subfigure}[t]{0.495\textwidth}
        \centering
        \includegraphics[width=0.9\linewidth]{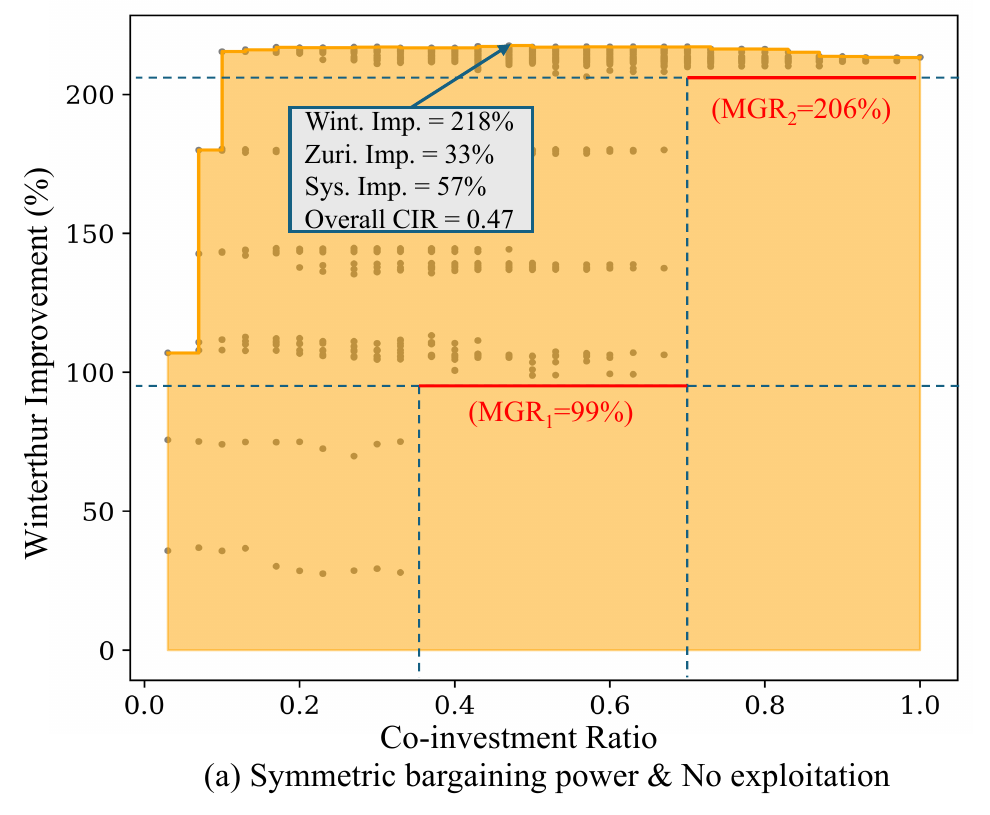}
        \vspace{-1em}
        \label{fig:winti_case_a}
    \end{subfigure}
    \begin{subfigure}[t]{0.495\textwidth}
        \centering
        \includegraphics[width=0.9\linewidth]{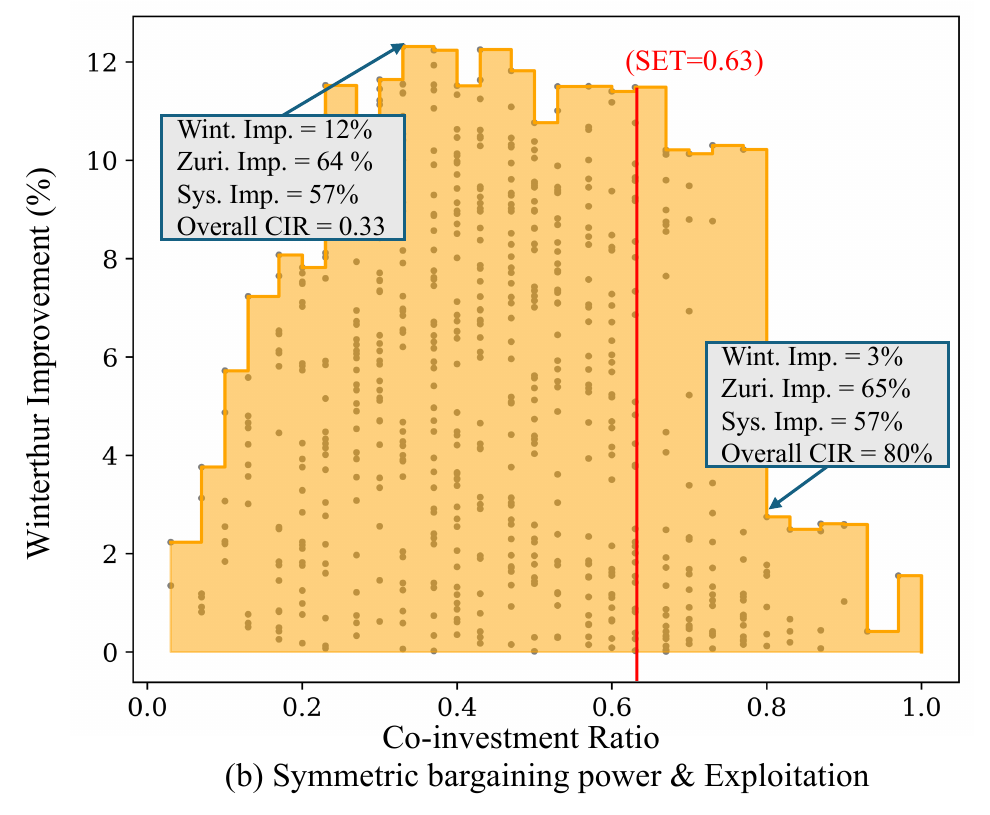}
        \vspace{-1em}
        \label{fig:winti_case_b}
    \end{subfigure}
    \begin{subfigure}[t]{0.49\textwidth}
        \centering
        \includegraphics[width=0.9\linewidth]{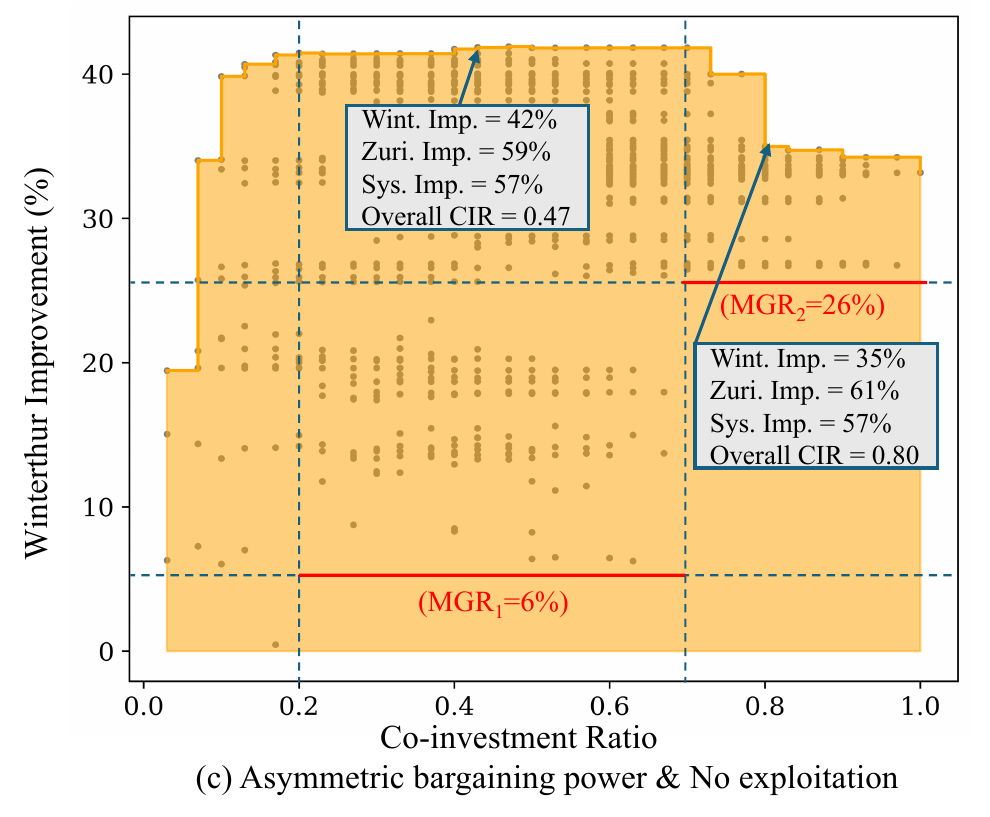}
        \vspace{-1em}
        \label{fig:winti_case_c}
    \end{subfigure}
    \begin{subfigure}[t]{0.49\textwidth}
        \centering
        \includegraphics[width=0.9\linewidth]{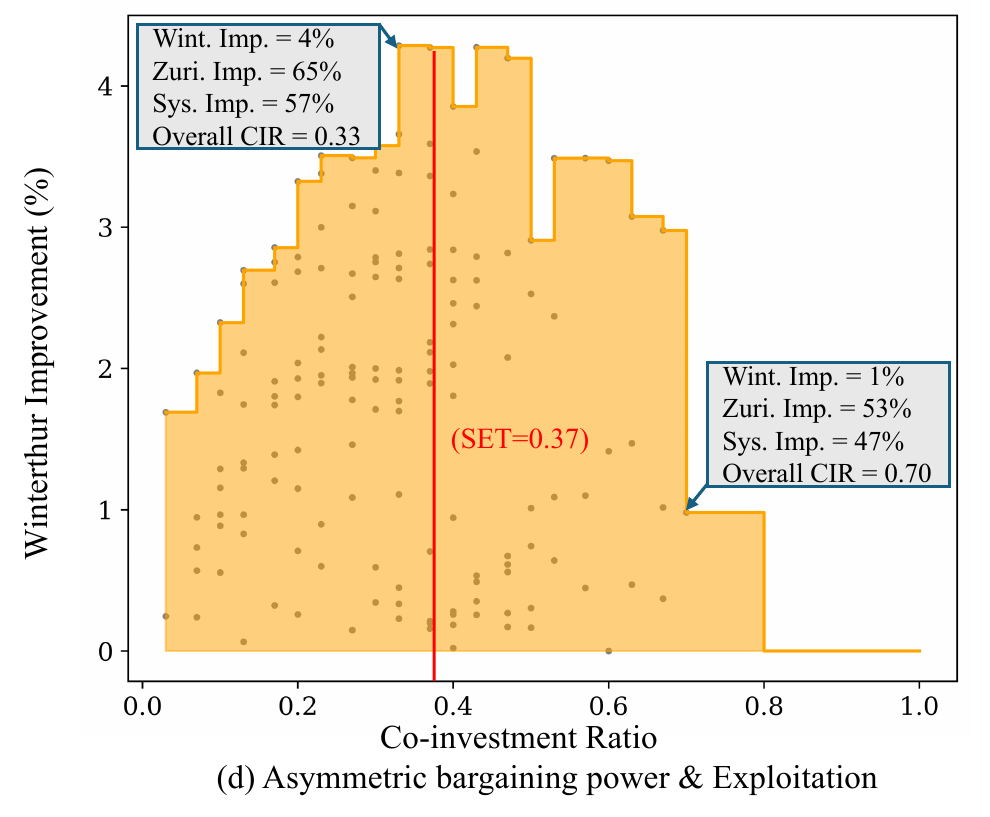}
        \vspace{-1em}
        \label{fig:winti_case_d}
    \end{subfigure}
    \caption{Co-investment ratio and improved performance for Winterthur under four payoff-sharing scenarios.}
    \label{fig:winti}
\end{figure*}

\subsection{Sioux Falls Network}
The Sioux Falls network consists of 24 nodes and 76 edges, partitioned into two regions: Region 1 (nodes 1–11), and Region 2 (nodes 12–24). 
We begin by examining a baseline scenario in which both regions are homogeneous in terms of construction resources and travel demand distributions.
We then simulate a three-year network design horizon in which, at the start of each year, regional operators decide whether to engage in co-investment and, if so, determine the co-investment ratio.
The baseline for comparison is a fully non-cooperative network design game, in which no negotiation occurs between agents ($\beta_i = 0$).
This allows us to quantify the system-wide improvements enabled by the proposed cooperative mechanisms. We use the model parameters for the Sioux Falls network case in~\cite{he25}.

\subsubsection{System improvement from co-investment} \label{sec:System Improvement from Co-investment}
\Cref{fig:overview} presents equilibrium solutions under different co-investment strategies. 
The horizontal axis (``years of cooperation'') denotes the duration operators adopt the proposed mechanisms. 
System improvement, in CHF/day\footnote{For uniformity, we adopt the CHF currency for all the case studies. At the time of submission, 1.0 CHF is equivalent to 1.24 USD.}, is measured relative to the baseline $(0,0,0)$ with no co-investment or payoff-sharing. 
We also report the percentage of environmental, social, and economic gains toward the system-optimal design. 
Specifically, two illustrative solutions are highlighted.
The filled star indicates the decision to co-invest during the network design phase, with red edges indicating added resources and green edges indicating fewer constructed edges compared to the baseline.
The red dot represents the Highest-Return Network, which indicates the equilibrium network with the highest improvement in the objective. 
This solution requires continuous co-investment in each design year, with the co-investment budget accounting for 50\% of the total budget. As a result, emissions can be reduced by 12.1 tons/day, revenue increases by 19.6k CHF/day, and customer costs are reduced by 28.8k CHF/day. 
With the proposed mechanism, allocating 50\% of the budget for co-investment can achieve outcomes that are very close to the optimal system results, reaching 96\%, 96\%, and 100\% across the three dimensions.
The blue dot represents the solution where the design strategy yields the most investment-efficient outcome. By co-investing only 3.3\% of the total budget in the initial design year, the system achieves a 3.7 ton/day reduction in emissions, 8.8k CHF/day in travel cost savings, and an additional revenue of 6.7k CHF/day.

As shown in \Cref{fig:co_ratio}, performance varies with the timing and allocation of investment, even under the same co-investment ratio.
For instance, by allocating 30\% of the budget in the first year and 10\% in the second year for co-investment, the resulting network can approach the system-optimal solution by 86\% with an additional return of 51k CHF/day, exceeding other budget distribution strategies with the same total investment.

\subsubsection{Heterogeneity (diversity) as an opportunity}
We next explore operator heterogeneity, focusing on differences in regional budget and intracity travel demand. 
In these experiments, the total system budget and aggregate travel demand remain fixed, but we vary Region 2's budget allocation and intracity demand (see parameter settings in \cref{tab:hete} in the Appendix).
The results show that heterogeneity can significantly amplify the benefits of the co-investment mechanism.
The greatest improvements occur when Region 2 has a larger budget but lower intracity travel demand than Region 1.
In such cases, Region 2 can strategically fund infrastructure in Region 1 that strengthens both local service and interregional connectivity.
Region 1, in turn, leverages these improvements to expand its own service capacity despite limited resources.
While heterogeneity is often viewed as a barrier to coordination, these findings suggest that, under efficient co-investment mechanisms, it can become a structural advantage, unlocking system-level performance gains that homogeneous systems cannot achieve~\cite{lin2025heterogeneous}.


\subsection{Zurich-Winterthur Network}\label{sec:Zurich-Winterthur Network}
The Swiss case study builds on the insight that interregional diversity can yield substantial system-wide gains.
In practice, regions often differ in size, resources, and demand patterns, characteristics that can be leveraged through co-investment strategies.
For this analysis, we extracted the \gls{acr:pt} network topologies of Zurich and Winterthur from OpenStreetMap~\cite{openstreetmap}.
Zurich's network contains 53 nodes and 66 edges; Winterthur's, 29 nodes and 34 edges.
Travel demand data is derived from a one-day transportation simulation calibrated using population data from the Swiss Federal Statistical Office~\cite{sbfsPortal}.
Based on demographic and service statistics~\cite{wikipedia_winterthur,wikipedia_zurich}, Zurich has roughly three times Winterthur’s population, and its \gls{acr:pt} system carries about ten times as many passengers per day.
Consistent with this, we assume Zurich's infrastructure budget is ten times that of Winterthur.
The redesign is implemented over a three-year horizon using a two-stage network design process.
Model parameters are provided in \cref{tab:Parameters} in the Appendix.

\subsubsection{Co-investment outcomes}
Among the many co-investment configurations tested, we focus on two representative ones (\cref{fig:zurciwinti_two}). 
\paragraph{Highest-return solution}
Co-investment ratios of 0.25 in the first year and 1.0 in both subsequent years deliver a 57\% improvement in total performance, a 41\% reduction in emissions, and a 20\% decrease in travel costs, while \gls{acr:pt} profits increase by 18\%.
Over half of all trips are served by \gls{acr:pt} under this design, representing a 600\% increase in ridership relative to the non-cooperative baseline.
\paragraph{Highest-efficiency solution}
A one-time co-investment of just 25\% of the first-year budget yields a 10\% improvement in system performance, an 87\% reduction in emissions, a 3\% decrease in travel costs, and a 23\% increase in \gls{acr:pt} revenue.
\gls{acr:pt} demand rises by 110\%.
This confirms the pattern observed in the Sioux Falls case (\cref{sec:System Improvement from Co-investment}): small, well-timed investments can yield disproportionately large efficiency gains.

\subsubsection{Strategic payoff-sharing}
In real-world negotiations, operators with smaller budgets often have weaker bargaining power and may be more vulnerable to strategic exploitation (see Section~\ref{sec:Discussion on Bargaining Power and Exploration}).
To examine these effects, we evaluate how Winterthur's improvement changes with bargaining power, the presence of exploitation, and the co-investment ratio, a proxy for the depth of cooperation.
We consider four scenarios (\cref{fig:winti}):
a) symmetric bargaining power, no exploitation,
b) symmetric bargaining power, with exploitation,
c) asymmetric bargaining power, no exploitation,
d) asymmetric bargaining power, with exploitation.
In this context, asymmetric bargaining power refers to a setting in which the payoff distribution is associated with the respective investment amounts of operators (in \cref{eq:alpha}). 
Exploitation indicates that the shareable payoff is generated solely from Winterthur, which is captured in \cref{eq:epsilon_qQ} and \cref{eq:epsilon_vq} by setting~$\epsilon_\text{zurich} = 0, \epsilon_\text{winti} = 1$. 
Non-exploitation allows payoff generation in both regions, with~$\epsilon_\text{zurich} = 1, \epsilon_\text{winti} = 1$.
We assume both regions use the same co-investment ratio.

\paragraph{Bargaining power effects}
Winterthur achieves its largest gains under symmetric bargaining with no exploitation (\cref{fig:winti}a), where even low co-investment ratios deliver strong returns, mirroring the pattern in \cref{fig:co_ratio}.
In contrast, when bargaining power is asymmetric, deeper cooperation can reduce Winterthur's benefits.
For instance, in \cref{fig:winti}c, increasing the co-investment ratio from 0.47 to 0.80 raises Zurich's benefit slightly (59\% to 61\%) but lowers Winterthur's from 42\% to 35\%.
This effect is absent under symmetric bargaining.

\paragraph{Strategic exploitation effects}
Without exploitation,, a minimum guaranteed return (MGR) emerges for Winterthur.
In \cref{fig:winti}a, Winterthur is guaranteed a 99\% improvement once the co-investment ratio exceeds 0.37, rising to 206\% for ratios above 0.7.
Even under asymmetric bargaining (\cref{fig:winti}c), MGRs of 6\% and 26\% are observed.
However, in \cref{fig:winti}b and \cref{fig:winti}d (with exploitation), no MGR exists; Winterthur's benefits can fall to zero.

A strategic exploitation threshold (SET) marks the point beyond which higher co-investment harms the weaker operator.
In \cref{fig:winti}b, this occurs at a ratio of 0.63;
in \cref{fig:winti}d, at 0.37.
Beyond these points, Winterthur's returns decline sharply: in \cref{fig:winti}b, increasing the ratio from 0.33 to 0.8 raises Zurich's benefit from 64\% to 65\%, but slashes Winterthur's from 12\% to 3\%.
In \cref{fig:winti}d, the drop is from 4\% to 1\% when the ratio increases from 0.33 to 0.7.
Above this threshold, continued cooperation offers Winterthur no advantage over the non-cooperative baseline, eliminating its incentive to participate.

\section{Conclusion}\label{sec:conclusion}
In this work, we proposed a game-theoretic framework for network design in settings where multiple self-interested operators make strategic decisions.
The framework formalizes the network design game for analyzing interactions in the absence of negotiations, and augments it with a co-investment and payoff-sharing mechanism to foster mutually beneficial cooperation in competitive environments.

The approach was first validated on the Sioux Falls benchmark network and then applied to the real-world \gls{acr:pt} systems of Zurich and Winterthur, Switzerland.
Across both cases, the proposed mechanism consistently improved network designs for the benefit of both operators and users.
A notable insight is that even modest, well-timed co-investments can deliver substantial gains across environmental, social, and economic dimensions.
Furthermore, regional heterogeneity, often seen as a coordination challenge, emerged as a structural advantage when leveraged through cooperative design.
In the Zurich-Winterthur study, we also examined bargaining power and strategic exploitation, finding that these factors strongly shape both the distribution of benefits and the incentives to deepen cooperation.

Looking ahead, the framework can be extended in several directions.
Incorporating additional real-world transportation systems and a wider set of public and private stakeholders would enable richer policy analysis.
Applying the mechanisms to multimodal settings, integrating rail, bus, ridesharing, and other services, could yield actionable insights for coordinated and sustainable mobility.
Beyond transportation, the model can be adapted to other networked infrastructures such as energy grids and communication systems, where local optimization and global performance are inherently linked.
Finally, incorporating network evolution, population migration, and uncertainty in agreement implementation would improve the framework's applicability to long-term, real-world planning.

\section*{Appendix A: Alternative Demand Model}
\label{appendix:traffic assignment}
In this work, we assume that congestion does not affect route choices. 
To incorporate congestion effects, we extend the demand model to capture strategic traveler interactions.
Following Wardrop’s first principle~\cite{wardrop1952road}, the UE traffic assignment assumes that travelers selfishly choose their routes, achieving equilibrium when no traveler can improve their outcome by unilaterally changing their route. 
In line with UE traffic assignment, we formulate the following optimization problem for user-level modeling for the regional network design problem:
\begin{subequations} \label{opt:ue}
\begin{align}
    \min_{y_e} &\sum_{e \in \edges}\int_{0}^{w} g_e \tup{y_e,x_e} \,dw,\\
    \text{s.t.} \quad
    &\sum_{k \in \mathcal{K}} f_m^k=\alpha_{m}\\
    &f_{m}^{k} \geq 0\\
    &y_e=\sum_{m \in \mathcal{M}} \sum_{k \in \mathcal{K}} f_{m}^{k}  \mathds{1}_{\edges^k_m\tup{e}}\\
    &0 \leq y_e \leq c_e, \quad \forall e \in \edges_P, \label{ue:cap}
\end{align}
\end{subequations}
where the function $g_e: \nonnegativeintegernumbers \rightarrow \nonnegativenumbers$ maps the edge flow to the generalized travel cost on edge $e$ in the multimodal transportation network (see \cref{eq:g}). 
Specifically, the formulation accounts for both congestion of road traffic and the availability of the \gls{acr:pt} service. 
The generalized travel cost includes a monetary valuation of time and a distance-based transportation fee.
The cost calculation differs depending on the transport mode. 
For road traffic on the alternative-mode layer, the Bureau of Public Roads (BPR) function is used to estimate travel time \cite{united1964traffic}. 
In contrast, \gls{acr:pt} services are assumed to operate on dedicated infrastructure, which is not affected by road congestion.
When the \gls{acr:pt} connection is unavailable (i.e., $ x_e = 0$), the travel cost on edge $e$ is set to a large positive constant $\Omega$.
\begin{align}
    g_e \tup{y_e,x_e} =
    \begin{cases}
    \gamma_\mathrm{vot} \hat{t}_{e}  \tup{ 1+ a \tup{\frac{y_e}{c_e}}^{b}}+l_e \gamma_1^A
    , & \text{if } e \in \edges_A,\\
    x_el_e\tup{\frac{\gamma_\mathrm{vot}}{v_P}+\gamma_1^P}+\tup{1-x_e}\Omega
    , &\text{otherwise}.
    \end{cases}
    \label{eq:g}
\end{align}
where $a$ and $b$ are the parameters of the BPR function, and $\hat{t}_{e}$ is the free flow time of edge $e$.

Then, the multi-regional network design problem can be structured as a Stackelberg game, with operators as leaders and travelers as followers. 
This captures the hierarchical nature of decision-making, involving interactions among travelers at the lower level and network designers at the upper level. 
The optimization-based demand model in \cref{opt:ue} is an alternative component in the general \gls{acr:ndg} framework (\cref{sec:Network Design Game}) and provides a basis for future research on \gls{acr:ndg} with hierarchical structures.

\section*{Appendix B: Parameters for network design model and experiment scenarios}
\label{appendix:para}
\Cref{tab:hete} shows the parameters used in the heterogeneous-region scenarios in the Sioux Falls case study, and \cref{tab:Parameters} presents model parameters for the  Zurich-Winterthur case.
A travel request can be classified as \emph{intra-regional}, if both its origin and destination belong to the same region, i.e, $o_m, d_m \in \nodes_i$. 
It is classified as \emph{inter-regional} if the origin and destination are in different regions, i.e., $o_m \in \nodes_i, d_m \in \nodes_j$, with $i,j \in \operatorSet, i \ne j$. 
We use $|R(\Theta_1^{intra})|$ and $|R(\Theta_2^{intra})|$ to denote the number of intra-regional trips originating in Region~1 and Region~2, respectively.

\begin{table}[!h]
    \centering
    \caption{Scenario Parameters for Region Heterogeneity.}
    \scalebox{0.7}{
    \begin{tabular}{lcc}
    \hline
     Scenarios & $B_1:B_2$ & $|R(\Theta_1^{intra})|:|R(\Theta_2^{intra})|$   \\ \hline
     Homogeneous & 1:1 & 1:1  \\ 
     Higher fund, Equal pop& 3:2 & 1:1\\
     Equal fund, Less pop & 1:1 & 2:3\\
     Higher fund, Higher pop& 3:2 & 3:2\\
     Equal fund, Higher pop& 1:1 & 3:2\\
     High fund,  Less pop& 3:2 & 2:3\\
    \hline
    \end{tabular}
    }
    \label{tab:hete}
\end{table}

\begin{table}[!h]
    \centering
    \caption{Model Parameters.}
    \scalebox{0.7}{
    \begin{tabular}{lllll}
    \hline
        Parameters& Description & Value  & Unit & Ref. \\ \hline
        \multicolumn{5}{l}{\textbf{Network design}} \\
        $B_\mathrm{zuri}$ & Budget for Zurich & $16 \times 10^4$ &  CHF/day & - \\ 
        $B_\mathrm{winti}$ & Budget for Winterthur& $1.6 \times 10^4$ &  CHF/day & - \\ 
        $c^{b}$ & Base cost & 91 &  CHF/day/km & \cite{flyvbjerg2008comparison} \\ 
        $c^{k}$ & Capacity cost & 84 & CHF/day/km & \cite{bus_speed} \\ 
        $s_\mathrm{max}$ & Maximum frequency & 20 & veh/h& ~\\ 
        $\Omega$ & Large number& $1 \times 10^8$ & -& -\\
        \multicolumn{5}{l}{\textbf{Travel demand}} \\
        $\tau$ & Demand growth rate & 1.5 &\% & \cite{growthrate}\\
        $\gamma_\mathrm{vot}$ & Value of time & 30  & CHF/h & \cite{schmid2021value} \\ 
        \textbf{Public transit} & ~ & ~ & ~ & ~ \\ 
        $\gamma_{1}^{P}$ & Service fee & 0.092  & CHF/km/pax & \cite{zvv} \\ 
        $\gamma_{2}^P$ & Emission & 0 & kg/km/pax & - \\ 
        $v_{P}$ & Speed & 50 & km/h & \cite{bus_speed} \\ 
        $\kappa$ & Capacity & 60  & seat/veh & \cite{transit_cap} \\
        \multicolumn{5}{l}{\textbf{Alternative mode}} \\
        $\gamma_{1}^{A}$ & Service fee & 0.65 & CHF/km/pax & \cite{mobilityeasy2025} \\ 
        $\gamma_{2}^A$ & Emission & 0.148 & kg/km/pax & \cite{em} \\ 
         $v_{A}$ & Speed & 60 & km/h & \cite{taxispeed} \\ \hline
    \end{tabular}
    }
    \label{tab:Parameters}
\end{table}

\newpage
\bibliographystyle{plainnat}  
\bibliography{ref}

@article{zardini2022analysis,
  title={Analysis and control of autonomous mobility-on-demand systems},
  author={Zardini, Gioele and Lanzetti, Nicolas and Pavone, Marco and Frazzoli, Emilio},
  journal={Annual Review of Control, Robotics, and Autonomous Systems},
  volume={5},
  pages={633--658},
  year={2022},
  publisher={Annual Reviews}
}

@article{globaltrade,
  title={The World Bank - Data},
  author={Avrachenkov, Konstantin and Elias, Jocelyne and Martignon, Fabio and Neglia, Giovanni and Petrosyan, Leon},
  journal={Computer Networks},
  volume={83},
  pages={265--279},
  year={2015},
  publisher={Elsevier},
}

@book{WorldMigrationReport,
  title = {World Migration Report 2024},
  author={McAuliffe, M. and L.A. Oucho},
  year = {2024},
  publisher={International Organization for Migration (IOM), Geneva}
}

@article{FARAHANI2013281,
title = {A review of urban transportation network design problems},
journal = {European Journal of Operational Research},
volume = {229},
number = {2},
pages = {281-302},
year = {2013},
issn = {0377-2217},
author = {Reza Zanjirani Farahani and Elnaz Miandoabchi and W.Y. Szeto and Hannaneh Rashidi},
}

@article{WANG2015213,
title = {A novel discrete network design problem formulation and its global optimization solution algorithm},
journal = {Transportation Research Part E: Logistics and Transportation Review},
volume = {79},
pages = {213-230},
year = {2015},
issn = {1366-5545},
author = {David Z.W. Wang and Haoxiang Liu and W.Y. Szeto},
}

@article{HUANG20201,
title = {A two-phase optimization model for the demand-responsive customized bus network design},
journal = {Transportation Research Part C: Emerging Technologies},
volume = {111},
pages = {1-21},
year = {2020},
issn = {0968-090X},
author = {Di Huang and Yu Gu and Shuaian Wang and Zhiyuan Liu and Wenbo Zhang},
}

@INPROCEEDINGS{he25,
  author={He, Mingjia and Censi, Andrea and Frazzoli, Emilio and Zardini, Gioele},
  booktitle={2025 American Control Conference (ACC)}, 
  title={Co-investment with Payoff Sharing Benefit Operators and Users in Network Design}, 
  year={2025},
  pages={1131-1138}}

@article{chow2014symbiotic,
  title={Symbiotic network design strategies in the presence of coexisting transportation networks},
  author={Chow, Joseph YJ and Sayarshad, Hamid R},
  journal={Transportation Research Part B: Methodological},
  volume={62},
  pages={13--34},
  year={2014},
  publisher={Elsevier}
}

@article{wang2022large,
  title={{Large-scale multimodal transportation network models and algorithms-Part II}: Network capacity and network design problem},
  author={Wang, Yu and Liu, Haoxiang and Fan, Yinchao and Ding, Jianxun and Long, Jiancheng},
  journal={Transportation Research Part E: Logistics and Transportation Review},
  volume={167},
  pages={102918},
  year={2022},
  publisher={Elsevier}
}

@article{wang2023integrated,
  title={Integrated multimodal freight service network design and pricing with a competing service integrator and heterogeneous shipper classes},
  author={Wang, Zhenjie and Zhang, Dezhi and Tavasszy, L{\'o}r{\'a}nt and Fazi, Stefano},
  journal={Transportation Research Part E: Logistics and Transportation Review},
  volume={179},
  pages={103290},
  year={2023},
  publisher={Elsevier}
}

@article{liu2023global,
  title={Global transportation infrastructure exposure to the change of precipitation in a warmer world},
  author={Liu, Kai and Wang, Qianzhi and Wang, Ming and Koks, Elco E},
  journal={Nature Communications},
  volume={14},
  number={1},
  pages={2541},
  year={2023},
  publisher={Nature Publishing Group UK London}
}

@report{itf2023,
   author={{International Transport Forum}},
  title={Comparing transport infrastructure investment policies around the globe},
  year={2023},
  institution = {International Transport Forum}
}

@article{medeiros2019cross,
  title={{Cross-border transports and cross-border mobility in EU border regions}},
  author={Medeiros, Eduardo},
  journal={Case studies on transport policy},
  volume={7},
  number={1},
  pages={1--12},
  year={2019},
  publisher={Elsevier}
}

@article{seo25tcns,
  title={Strategic pricing and routing to maximize profit in congested roads considering interactions with travelers},
  author={Kim, Youngseo and Duan, Ning and Zardini, Gioele and Samaranayake, Samitha and Wischik, Damon},
  journal={IEEE Transactions on Control of Network Systems},
  year={2025},
  publisher={IEEE}
}

@report{cb2022,
  title={Cross-border Rail Transport Potential},
  year={2022},
  author = {{European Union Agency for Railways}}
}

@article{porru2020smart,
  title={Smart mobility and public transport: Opportunities and challenges in rural and urban areas},
  author={Porru, Simone and Misso, Francesco Edoardo and Pani, Filippo Eros and Repetto, Cino},
  journal={Journal of traffic and transportation engineering (English edition)},
  volume={7},
  number={1},
  pages={88--97},
  year={2020},
  publisher={Elsevier}
}

@article{liu2021review,
  title={A review of public transport transfer coordination at the tactical planning phase},
  author={Liu, Tao and Cats, Oded and Gkiotsalitis, Konstantinos},
  journal={Transportation Research Part C: Emerging Technologies},
  volume={133},
  pages={103450},
  year={2021},
  publisher={Elsevier}
}

@article{ZHANG2023102784,
title = {Integrated deployment of dedicated lane and roadside unit considering uncertain road capacity under the mixed-autonomy traffic environment},
journal = {Transportation Research Part B: Methodological},
volume = {174},
pages = {102784},
year = {2023},
issn = {0191-2615},
author = {Fang Zhang and Jian Lu and Xiaojian Hu and Qiang Meng},
}

@article{LIU2021102187,
title = {Optimal locations and electricity prices for dynamic wireless charging links of electric vehicles for sustainable transportation},
journal = {Transportation Research Part E: Logistics and Transportation Review},
volume = {152},
pages = {102187},
year = {2021},
issn = {1366-5545},
author = {Haoxiang Liu and Yuncheng Zou and Ya Chen and Jiancheng Long},
keywords = {Sustainable transportation, Dynamic charging, Electric vehicles, Charging station location, Electricity price, Surrogate model-based algorithm},
}

@article{wardrop1952road,
  title={Some theoretical aspects of road traffic research.},
  author={Wardrop, John Glen},
  journal={Proceedings of the institution of civil engineers},
  volume={1},
  number={3},
  pages={325--362},
  year={1952},
  publisher={Thomas Telford-ICE Virtual Library}
}

@article{simatupang2002collaborative, 
title={The collaborative supply chain}, 
author={Simatupang, Togar M and Sridharan, Ramaswami}, 
journal={The international journal of logistics management}, 
volume={13}, 
number={1}, 
pages={15--30}, 
year={2002}, 
publisher={Emerald Group Publishing Limited} }

@article{irwin2024does,
  title={Does trade reform promote economic growth? A review of recent evidence},
  author={Irwin, Douglas A},
  journal={The World Bank Research Observer},
  year={2024},
  publisher={Oxford University Press}
}

@article{lo2009time,
  title={Time-dependent transport network design under cost-recovery},
  author={Lo, Hong K and Szeto, Wai Yuen},
  journal={Transportation Research Part B: Methodological},
  volume={43},
  number={1},
  pages={142--158},
  year={2009},
  publisher={Elsevier}
}

@article{gao2005solution,
  title={Solution algorithm for the bi-level discrete network design problem},
  author={Gao, Ziyou and Wu, Jianjun and Sun, Huijun},
  journal={Transportation Research Part B: Methodological},
  volume={39},
  number={6},
  pages={479--495},
  year={2005},
  publisher={Elsevier}
}

@inproceedings{dimitriou2008genetic,
  title={Genetic computation of road network design and pricing Stackelberg games with multi-class users},
  author={Dimitriou, Loukas and Tsekeris, Theodore and Stathopoulos, Antony},
  booktitle={Applications of Evolutionary Computing},
  pages={669--678},
  year={2008},
  organization={Springer}
}

@inproceedings{zardini2021game,
  title={Game theory to study interactions between mobility stakeholders},
  author={Zardini, Gioele and Lanzetti, Nicolas and Guerrini, Laura and Frazzoli, Emilio and D{\"o}rfler, Florian},
  booktitle={2021 IEEE International Intelligent Transportation Systems Conference (ITSC)},
  pages={2054--2061},
  year={2021},
  organization={IEEE}
}

@inproceedings{zardini2023strategic,
  title={Strategic interactions in multi-modal mobility systems: A game-theoretic perspective},
  author={Zardini, Gioele and Lanzetti, Nicolas and Belgioioso, Giuseppe and Hartnik, Christian and Bolognani, Saverio and D{\"o}rfler, Florian and Frazzoli, Emilio},
  booktitle={2023 IEEE 26th International Conference on Intelligent Transportation Systems (ITSC)},
  pages={5452--5459},
  year={2023},
  organization={IEEE}
}

@book{united1964traffic,
  title={Traffic Assignment Manual for Application with a Large, High Speed Computer},
  author={{Bureau of Public Roads, US}},
  number={v. 37},
  year={1964},
  publisher={U.S. Department of Commerce, Bureau of Public Roads, Office of Planning, Urban Planning Division}
}

@article{schmid2021value,
  title={{The value of travel time savings and the value of leisure in Zurich: Estimation, decomposition and policy implications}},
  author={Schmid, Basil and Molloy, Joseph and Peer, Stefanie and Jokubauskaite, Simona and Aschauer, Florian and H{\"o}ssinger, Reinhard and Gerike, Regine and Jara-Diaz, Sergio R and Axhausen, Kay W},
  journal={Transportation Research Part A: Policy and Practice},
  volume={150},
  pages={186--215},
  year={2021},
  publisher={Elsevier}
}

@misc{em,
  author={{International Energy Agency}},
  title = {Emission estimates by transport mode},
  howpublished = {\url{https://support.google.com/travel/answer/13571996?hl=en}},
  note = {Accessed: 2024},
}

@misc{taxispeed,
  author={{Digital Administration Switzerland}},
  title = {{Maximum speed limit}},
  howpublished = {\url{https://www.ch.ch/en/vehicles-and-traffic/how-to-behave-in-road-traffic/traffic-regulations/driving-over-the-speed-limit}},
  note = {Accessed: 2024},
}

@misc{growthrate,
  author={{World Bank Group}},
  title = {{Urban population growth}},
  howpublished = {\url{https://data.worldbank.org/indicator/SP.URB.GROW}},
  note = {Accessed: 2024},
}

@article{Zeng2023OptimizationOE,
  title={Optimization of Electric Bus Scheduling for Mixed Passenger and Freight Flow in an Urban-Rural Transit System},
  author={Ziling Zeng and Xiaobo Qu},
  journal={IEEE Transactions on Intelligent Transportation Systems},
  year={2023},
  volume={24},
  pages={1288-1298},
}

@misc{Wintle2024,
  author    = {Thomas Wintle},
  title     = {{Switzerland to pay €50m to electrify network in Germany}},
  journal   = {RailTech.com},
  year      = {2024},
  month     = {10},
  day       = {29},
  url       = {https://www.railtech.com/all/2024/10/29/switzerland-to-pay-e50m-to-electrify-network-in-germany/},
  note      = {Accessed: 2025-05-26}
}

@article{Luathep2011GlobalOM,
  title={Global optimization method for mixed transportation network design problem: A mixed-integer linear programming approach},
  author={Paramet Luathep and Agachai Sumalee and William H. K. Lam and Zhi-Chun Li and Hong K. Lo},
  journal={Transportation Research Part B-methodological},
  year={2011},
  volume={45},
  pages={808-827},
}

@article{WANG2010482,
title = {Global optimum of the linearized network design problem with equilibrium flows},
journal = {Transportation Research Part B: Methodological},
volume = {44},
number = {4},
pages = {482-492},
year = {2010},
issn = {0191-2615},
author = {David Z.W. Wang and Hong K. Lo},
}

@article{Yang01071998,
author = {Hai Yang and Michael G. H. Bell},
title = {Models and algorithms for road network design: a review and some new developments},
journal = {Transport Reviews},
volume = {18},
number = {3},
pages = {257--278},
year = {1998},
publisher = {Routledge},
}

@article{cats2025shift,
  author    = {Cats, Oded},
  title     = {The long journey towards a shift to rail in the European long-distance passenger transport market},
  journal   = {npj Sustainable Mobility and Transport},
  volume    = {2},
  pages     = {7},
  year      = {2025},
}

@article{GROLLE2024103906,
title = {{Service design and frequency setting for the European high-speed rail network}},
journal = {Transportation Research Part A: Policy and Practice},
volume = {179},
pages = {103906},
year = {2024},
issn = {0965-8564},
author = {Jorik Grolle and Barth Donners and Jan Anne Annema and Mark Duinkerken and Oded Cats},
}

@article{lanzetti2023interplay,
  title={On the interplay between self-driving cars and public transportation},
  author={Lanzetti, Nicolas and Schiffer, Maximilian and Ostrovsky, Michael and Pavone, Marco},
  journal={IEEE Transactions on Control of Network Systems},
  volume={11},
  number={3},
  pages={1478--1490},
  year={2023},
  publisher={IEEE}
}

@inproceedings{Bierlaire2003BIOGEMEAF,
  title={BIOGEME: a free package for the estimation of discrete choice models},
  author={Michel Bierlaire},
  year={2003},
}

@article{Ricard202550YO,
  title={50 Years of Behavioral Models for Transportation and Logistics},
  author={L{\'e}a Ricard and Michel Bierlaire},
  journal={EURO J. Transp. Logist.},
  year={2025},
  volume={14},
  pages={100156},
}

@book{ben1985discrete,
  title={Discrete choice analysis: theory and application to travel demand},
  author={Ben-Akiva, Moshe E and Lerman, Steven R},
  volume={9},
  year={1985},
  publisher={MIT press}
}

@article{meister2023route,
  title={Route choice modeling for cyclists on urban networks},
  author={Meister, Adrian and Felder, Matteo and Schmid, Basil and Axhausen, Kay W},
  journal={Transportation research part A: policy and practice},
  volume={173},
  pages={103723},
  year={2023},
  publisher={Elsevier}
}

@INPROCEEDINGS{Cauligi9304043,
  author={Cauligi, Abhishek and Culbertson, Preston and Stellato, Bartolomeo and Bertsimas, Dimitris and Schwager, Mac and Pavone, Marco},
  booktitle={2020 59th IEEE Conference on Decision and Control (CDC)}, 
  title={Learning Mixed-Integer Convex Optimization Strategies for Robot Planning and Control}, 
  year={2020},
  volume={},
  number={},
  pages={1698-1705},
  keywords={Robots;Task analysis;Planning;Convex functions;Acceleration;Programming;Dynamical systems}}

@book{lee2012mixed,
  title     = {Mixed Integer Nonlinear Programming},
  editor    = {Lee, Jon and Leyffer, Sven},
  year      = {2012},
  publisher = {Springer-Verlag},
  address   = {New York},
}

@article{helbing2013globally,
  title={Globally networked risks and how to respond},
  author={Helbing, Dirk},
  journal={Nature},
  volume={497},
  number={7447},
  pages={51--59},
  year={2013},
  publisher={Nature Publishing Group UK London}
}

@ARTICLE{Bakhshayesh2023,
  author={Bakhshayesh, Babak Ghaffarzadeh and Kebriaei, Hamed},
  journal={IEEE Transactions on Control of Network Systems}, 
  title={Generalized Wardrop Equilibrium for Charging Station Selection and Route Choice of Electric Vehicles in Joint Power Distribution and Transportation Networks}, 
  year={2023},
  volume={10},
  number={3},
  pages={1245-1254}}

@article{nash1950bargaining,
  author    = {Nash, John F.},
  title     = {The Bargaining Problem},
  journal   = {Econometrica},
  volume    = {18},
  number    = {2},
  pages     = {155--162},
  year      = {1950},
  publisher = {Econometric Society},
}

@book{berge1963topological,
  author    = {Berge, Claude},
  title     = {Topological Spaces: Including a Treatment of Multi-valued Functions, Vector Spaces, and Convexity},
  year      = {1963},
  publisher = {Oliver \& Boyd},
  address   = {Edinburgh}
}

@misc{sbfsPortal,
  author       = {{Swiss Federal Statistical Office}},
  title        = {Statistics},
  year         = {2025},
  url          = {https://www.bfs.admin.ch/bfs/en/home/statistics.html},
}

@misc{wikipedia_zurich,
  author       = "{{Wikipedia}}",
  title        = "{Zurich}",
  year         = 2025,
  note         = "[Online; accessed 29-July-2025]",
  url          = "https://en.wikipedia.org/wiki/Zurich"
}

@misc{wikipedia_winterthur,
  author       = "{{Wikipedia}}",
  title        = "{Winterthur}",
  year         = 2025,
  note         = "[Online; accessed 29-July-2025]",
  url          = "https://de.wikipedia.org/wiki/Winterthur"
}

@article{kakutani1941generalization,
  author    = {Kakutani, Shizuo},
  title     = {A Generalization of Brouwer's Fixed Point Theorem},
  journal   = {Duke Mathematical Journal},
  volume    = {8},
  number    = {3},
  pages     = {457--459},
  year      = {1941},
}

@article{lin2025heterogeneous,
  author       = {Lin, Ruoyu and Kim, Soobum and Egerstedt, Magnus},
  title        = {{Heterogeneous Collaborative Pursuit via Coverage Control Driven by Fokker--Planck Equations}},
  journal      = {IEEE Transactions on Robotics},
  year         = {2025},
}

@misc{openstreetmap,
  author       = {{OpenStreetMap contributors}},
  title        = {OpenStreetMap},
  year         = {2025},
  howpublished = {\url{https://www.openstreetmap.org}},
}

@article{leblanc1975efficient,
  author    = {LeBlanc, Larry J. and Morlok, Edward K. and Pierskalla, William P.},
  title     = {An Efficient Approach to Solving the Road Network Equilibrium Traffic Assignment Problem},
  journal   = {Transportation Research},
  volume    = {9},
  number    = {5},
  pages     = {309--318},
  year      = {1975},
}

@book{basar1999dynamic,
  author    = {Ba{\c{s}}ar, Tamer and Olsder, Geert Jan},
  title     = {Dynamic Noncooperative Game Theory},
  edition   = {2},
  year      = {1999},
  publisher = {Society for Industrial and Applied Mathematics},
  address   = {Philadelphia, PA},
}

@article{paccagnan2019nash,
  author    = {Paccagnan, Dario and Gentile, Benedetto and Parise, Francesca and Kamgarpour, Maryam and Lygeros, John},
  title     = {Nash and Wardrop equilibria in aggregative games with coupling constraints},
  journal   = {IEEE Transactions on Automatic Control},
  volume    = {64},
  number    = {4},
  pages     = {1373--1388},
  year      = {2019},
}

@misc{horpeniakova2024railmarket,
  author       = {Horpeniakova, Marketa},
  title        = {{Switzerland Invests in Rail Project to Improve Basel–Schaffhausen–St. Gallen Route}},
  howpublished = {RailMarket},
  year         = {2024},
  month        = {Oct},
  day          = {29},
  note         = {Available online; accessed 2025-08-13},
}

@online{mobilityeasy2025,
  author       = {{Mobility Cooperative}},
  title        = {{MobilityEASY: Affordable car rental in Switzerland}},
  organization = {Mobility},
  url          = {https://mobility.ch/en/mobility-news/products-and-prices/mobilityeasy},
}

@misc{zvv,
  author       = {{Zurich Transport Network (ZVV)}},
  title        = {Single tickets (ZVV)},
  howpublished = {\url{https://www.zvv.ch/en/travelcards-and-tickets/tickets/single-tickets.html}},
  note         = {Accessed: 2025-08-15},
  year         = {2025}
}

@misc{bus_speed,
  author       = {{Swiss Confederation}},
  title        = {Driving over the speed limit},
  howpublished = {\url{https://www.ch.ch/en/vehicles-and-traffic/how-to-behave-in-road-traffic/traffic-regulations/driving-over-the-speed-limit}},
}

@article{transit_cap,
  author       = {Sinner, Marc and Weidmann, Ulrich and Nash, Andrew},
  title        = {Application of a Cost-Allocation Model to Swiss Bus and Train Lines},
  journal      = {Transportation Research Record},
  volume       = {2672},
  number       = {8},
  pages        = {431--442},
  year         = {2018},
}

@article{flyvbjerg2008comparison,
  author  = {Flyvbjerg, Bent and Bruzelius, Nils and van Wee, Bert},
  title   = {Comparison of Capital Costs per Route-Kilometre in Urban Rail},
  journal = {European Journal of Transport and Infrastructure Research},
  year    = {2008},
  volume  = {8},
  number  = {1},
  pages   = {17--30},
}

\end{document}